\newtheorem{theorem}{Theorem}
\newtheorem{corollary}{Corollary}
\def\BibTeX{{\rm B\kern-.05em{\sc i\kern-.025em b}\kern-.08em
    T\kern-.1667em\lower.7ex\hbox{E}\kern-.125emX}}
\begin{document}

\title{Choosing optimal parameters for a distributed multi-constrained QoS routing}

\author{\IEEEauthorblockN{1\textsuperscript{st} Sergey Komech}
\IEEEauthorblockA{\textit{Institute for Information Transmission Problems} \\
Moscow, Russia \\
komech@iitp.ru \\
ORCID 0000-0002-8261-510X}
\and
\IEEEauthorblockN{2\textsuperscript{nd} Andrey Kupavskii}
\IEEEauthorblockA{\textit{Moscow Institute of Physics and Technology} \\
Dolgoprudny, Russia \\
kupavskii@ya.ru \\
ORCID 0000-0002-8313-9598}
\and
\IEEEauthorblockN{3\textsuperscript{rd} Alexei V Vezolainen}
\IEEEauthorblockA{\textit{Huawei Russian Research Institute} \\
Moscow, Russia \\
alexei.v.vezolainen@huawei.com \\
ORCID 0000-0002-1498-3920}
}

\maketitle

\begin{abstract}
We consider several basic questions on distributed routing in directed graphs with multiple additive costs,
or metrics, and multiple constraints.
Distributed routing in this sense is used in several protocols, such as IS-IS and OSPF.
A practical approach to the multi-constraint routing problem is to, first,
combine the metrics into a single `composite' metric, and then apply
one-to-all shortest path algorithms, e.g. Dijkstra, in order to find shortest path trees.
We show that, in general, even if a feasible path exists and is known for every source and destination pair,
it is impossible to guarantee a distributed routing under several constraints.
We also study the question of choosing the optimal `composite' metric.
We show that under certain mathematical assumptions we can efficiently find a convex
combination of several metrics that maximizes the number of discovered feasible paths.
Sometimes it can be done analytically, and is in general possible using what we call a 'smart iterative approach'.
We illustrate these findings by extensive experiments on several typical network topologies.
\end{abstract}

\begin{IEEEkeywords}
distributed routing, multiple constraints, composite metric
\end{IEEEkeywords}


\section{Introduction}

A fundamental problem of network routing is to find a path which satisfies multiple constraints such as delay, packet loss rate, cost, bandwidth etc.
The multi-constrained routing problem is well-known and widely investigated
using different approaches such as Multi-Constrained Path,
Multi-Constrained Shortest Path, or Multi-Constrained Optimal Path problem
\cite{2003Kuipers}, \cite{2011Multi,LOZANO2013}, \cite{2000Tamcra,2001Optimal,2015MCMSP}.


We study this question from the perspective of distributed routing, which means that at each node some deterministic routing table in the format (destination, next hop) is used to forward packets. (At the same time, we make no assumption on the algorithm that has generated such a table.) We show that, in general, there is no distributed routing solution that could provide us with, say, feasible paths for more than 50\% of source-destination pairs, even
 if feasible paths exist for every such pair. We provide formal proofs along with several examples.

On the positive side, we study the choice of an optimal  weight $p$ for combining two weight functions into a new one.
We show that under some mathematical assumptions the corresponding optimal $p$ exists.
We study single the resulting composite metric as a function of $p$ and show its convexity in a general case with multiple constraints.  Our results generalize the $2$-constraints case studied in \cite{Puri}.

Comparing it with previous studies, while aforementioned approaches aim at finding feasible paths between source and destination, our approach is designed to find an optimal parameter $p$ which then can be used by usual Shortest Path algorithms.

Two different schemes of routing are proposed. The first one relies on a unique $p$-value for all the routers. The second one allows routers to have a set of  ``optimal'' $p$'s. Both approaches seem to be applicable to distributed routing with multiple constraints.

We propose a simple estimate for an optimal parameter $p$ for uniformly distributed random costs. It could also be used as a rule of thumb for a quick choice
of $p$ for normal distributions of costs. For a general case, we describe an iterative approach to find an optimal value of parameter $p$ for a distributed routing.

\section{Related Work}

Providing near-optimal routing with small tables and labels is a fundamental challenge in both centralized and distributed routing modes.
Recently, in \cite{2020multiplerouting} the authors studied limitations of standard vectoring protocols such as EIGRP, BGP, DSDV, and Babel, see \cite{EIGRP},\cite{bgp2006},\cite{dsdv}, \cite{babel} respectively. From a mathematical point of view, limitations occur since standard vectoring protocols fail to find optimal paths if extension operation is not isotone for the total order on a set of paths, e.g.  \cite{2003metrics, 2019nonisotonic}. The authors in \cite{2020multiplerouting}  designed a new vectoring protocol based on partial order that satisfied isotonicity and respected  total order. This approach  leads to an increase in the size of routing tables, because a single path is replaced by a set of incomparable paths. Intensive research on balancing between near-optimality and optimizing the size of routing tables lasts for years.

A well-known approach to multiple constraints routing is to consider a linear combination of costs as a new single composite weight function and to find the corresponding shortest paths \cite{Jaffe,Puri,Multi-routing-2constraint-single2004} or $k$-shortest paths, as in \cite{2000Tamcra,2015MCMSP}. Authors considered an iterative algorithm for finding paths using a shortest path algorithm with a linear combination of weights as a new weight function \cite{Cui}, \cite{cui2003precomputation}, \cite{Curado}.

In \cite{Puri} the problem of routing under multiple constraints was considered and some algorithms were presented for
a graph where each edge is labeled with a cost and a delay.
The authors showed that the problem is NP-hard and presented a pseudo-polynomial time
algorithm which solved the problem exactly.
The algorithm finds a path satisfying the constraints or states that there
is no such path.
The authors developed an algorithm for finding a path which satisfies the two constraints using the shortest path algorithm for the weight $\alpha\cdot$cost$+(1-\alpha)\cdot$delay. Unlike the former, the latter algorithm does not guarantee finding a feasible path, even if one exists.

In \cite{2016Wang} the authors proposed a Multi-Constrained Multi-Path Problem
to find an optimal path and sub-optimal paths for routing under multi-constraints for some particular
Data Center Networks.

\section{Challenges in distributed multi-constrained routing}
In this section, we show that distributed routing poses additional challenges in multi-constrained setting. Formally, assume that each edge in the network has two costs, and the goal is to find certain paths that satisfy a constraint in each of the two costs. We will show that, even if the required paths exist and are easy to find in a network and, moreover, have certain slack in both constraints, in general it  is impossible to provide a distributed solution that will connect a majority of pairs of vertices with feasible paths. We will provide sharp bounds in terms of the ratio of feasible paths, as well as the slack in the costs.

We assume that the costs are independent and additive, that is, the cost on a path is equal to the sum of costs on its edges.
Recall that we work with distributed routing, i.e. we assume that each vertex in the network has a routing table of the form (destination, next hop/next edge)
and routes packets using only this table. This is the only assumption on the algorithms.
We call such class of algorithms {\it Distributed Routing Algorithms}.

Consider the following network $G_2 = (V,E)$ with vertices  $V = \{c,v_1,\ldots, v_{2n}\}$, edges $E$, and two costs $(w_1, w_2)$ on each edge.
Let $E = \cup_{i=1}^{2n} E_i$, where $E_i$ consists of three edges connecting $c$ and $v_i$: two edges $cv_i$ with costs $(1,0)$ and $(0,1)$, respectively, and one edge $v_ic$ with cost $(1,0)$ if $i\le n$ and cost $(0,1)$ if $i>n$.
The corresponding oriented multigraph is shown in the Figure~\ref{not100-2}.

\begin{figure}[!h]
\centering
\begin{tikzpicture}[> = stealth, 
            shorten > = 1pt
                   ]

  \node[circle,draw=red,thick] (O) at (0,0){$c$} ;
  \node[circle,draw] (A) at (3,0) {$v_i$} ;
  \node[circle,draw] (B) at (-3,0) {$v_{n+i}$};




  \draw (O) edge[bend left=50,->]  node [above] {$(0;1)$}  (A);
  \draw (O) edge[bend right=50,->] node [below] {$(1;0)$} (A);
  \draw (A) edge[->] node [below] {$(1;0)$} (O);

   \draw (O) edge[bend left=50,->]  node [below] {$(1;0)$}  (B);
  \draw (O) edge[bend right=50,->] node [above] {$(0;1)$} (B);
  \draw (B) edge[->] node [above] {$(0;1)$} (O);

\end{tikzpicture}

\caption{No guarantee for the distributed multi-constrained routing is possible for such a scenario. The vertices $v_i$, $1\le i\le n$,  have an edge toward $c$ with costs $(w_1;w_2):=(1;0)$; the vertices  $v_{n+i}$, $1\le i\le n$, have an edge towards $c$ with costs $(0;1)$.}
\label{not100-2}
\end{figure}
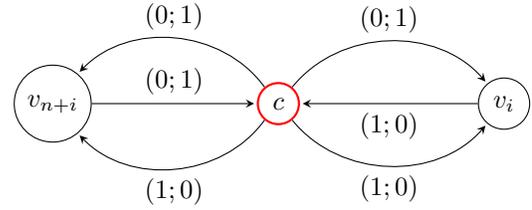


\begin{theorem}\label{th1}
 Assume that $A$ is a distributed routing algorithm that aims to find paths between any two pairs of vertices in $G_2$ that satisfy requirements $(w_1,w_2)< (2,2)$, where $w_i$ is the sum of the $i$-th coordinates of cost vectors along the path. (We call such paths {\emph satisfying}.) Then

 \begin{itemize}
   \item[(i)] There is a satisfying path between any two vertices;
   \item[(ii)] the algorithm $A$ cannot find satisfying paths for more than $(2n+1)^2-2n(n-1)$ ordered pairs of vertices out of $(2n+1)^2$. That is, asymptotically we can find satisfying paths for no more than $0.5+o(1)$ fraction of pairs.
 \end{itemize}
\end{theorem}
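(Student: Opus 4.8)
The plan is to prove (i) directly and (ii) by a pigeonhole-type counting argument on the routing table entries stored at the central vertex $c$. For (i), I would observe that between any two vertices $v_i$ and $v_j$ there is a two-hop path $v_i \to c \to v_j$; I need to check that one can always pick the incoming edge to $c$ and the outgoing edge from $c$ so that the total cost stays strictly below $(2,2)$. Since from $c$ to any $v_j$ we have a choice of cost $(1,0)$ or $(0,1)$, and into $c$ from $v_i$ the cost is fixed (either $(1,0)$ or $(0,1)$ depending on whether $i\le n$), we can always choose the $c\to v_j$ edge whose cost vector is ``complementary'' to the $v_i\to c$ edge, giving total cost $(1,1) < (2,2)$. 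The cases $v_i \to c$ and $c \to v_j$ separately, and $v_i \to v_i$ (the empty path, cost $(0,0)$), are immediate. So a satisfying path always exists.

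For (ii), the key observation is that a distributed routing algorithm must fix, for each destination $v_j$, a single ``next hop'' edge leaving $c$ — call its cost vector $\chi_j \in \{(1,0),(0,1)\}$. This is the one bit of freedom the algorithm has at $c$. Now consider a source $v_i$ with $i\le n$, so the edge $v_i\to c$ has cost $(1,0)$. The only reasonable route from $v_i$ to $v_j$ (for $j\ne i$) goes through $c$, and its cost is $(1,0) + \chi_j$; this is $<(2,2)$ iff $\chi_j = (0,1)$. Symmetrically, for a source $v_{n+i}$ the edge into $c$ has cost $(0,1)$, and the route to $v_j$ is satisfying iff $\chi_j = (1,0)$. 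Hence for each destination $v_j$, the choice of $\chi_j$ makes the $v_j$-route satisfying for \emph{either} all the ``type-1'' sources \emph{or} all the ``type-2'' sources, but never both — losing at least $n-1$ source-destination pairs (excluding the case source $=v_j$) no matter how $\chi_j$ is set. One must also handle paths that wander through $c$ more than once, but any such detour only adds cost, so it cannot help; this needs a brief argument that a longer walk through $c$ cannot be satisfying, relying on additivity and the fact that every edge has cost $\ge (1,0)$ or $\ge(0,1)$ in one coordinate.

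Summing the loss over the $2n$ destinations $v_j$ gives at least $2n(n-1)$ unsatisfiable ordered pairs, hence at most $(2n+1)^2 - 2n(n-1)$ satisfiable ones, which is the claimed bound; dividing by $(2n+1)^2$ and letting $n\to\infty$ gives the $0.5 + o(1)$ fraction. The main obstacle I anticipate is the bookkeeping around the ``boundary'' pairs — those involving $c$ itself as source or destination, and the diagonal pairs $v_i \to v_i$ — and making rigorous the claim that no clever multi-hop route (repeatedly visiting $c$, or bouncing between $c$ and some $v_k$) can rescue a pair that the single-bit choice $\chi_j$ has doomed. The clean way to dispose of this is to note that the routing table is deterministic and loop-free on delivered packets, so the actual path taken from $v_i$ to $v_j$ is simple, and in $G_2$ every simple path from a $v_i$ to a $v_j\ne v_i$ is exactly $v_i\to c\to v_j$, which pins the cost to $(1,0)+\chi_j$ or $(0,1)+\chi_j$ as above.
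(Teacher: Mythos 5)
Your proposal is correct and follows essentially the same argument as the paper: part (i) via the two-hop path of cost $(1,1)$, and part (ii) by observing that the forced next-hop choice at $c$ for each destination $v_j$ sacrifices at least $n-1$ sources of one type, summing to $2n(n-1)$ lost ordered pairs. Your extra care in ruling out longer walks through $c$ (every edge costs $(1,0)$ or $(0,1)$, so any walk of length $\ge 4$ violates the constraint) is a small but welcome addition that the paper leaves implicit.
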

\begin{proof}
The first part should be obvious: there are paths of costs at most $(1,1)$ for any pair of vertices.
We have to show that for any $A$ the number of pairs of vertices with no satisfying path found by $A$ is at least $2n(n-1)$. Indeed,  for each $v_j$ as the destination the vertex $c$ has to choose either the edge with cost $(1,0)$ or the edge with cost $(0,1)$ as the next edge. In the former case, $v_j$ is unreachable for the vertices $v_i$ with $i\le n$, possibly excluding $v_j$ itself. Similarly, in the second case $v_j$ is unreachable for the vertices $v_i$ with $i>n$, possibly excluding $v_j$ itself. One way or another, there are at least $n-1$ vertices from which $v_j$ is unreachable via paths provided by $A$. Summing over all $v_j$, we get the result.
\end{proof}
Note that in the previous theorem  there were paths with costs almost twice smaller than the constraint in both coordinates. The next theorem shows that this is tight.
\begin{theorem}\label{th2}
 Let $G = ([n], E)$ be a network with two non-negative costs at each edge. Fix a QoS restriction for each pair of nodes $S =(u,v)$
 in the form $\le (2a_P,2a_P)$, which means that we need to find a $(u,v)$-path of the cost at most $2a_P$ in both costs. Let $\mathcal P\subset {[n]\choose 2}$ be a collection of pairs $P=(v,w)$ such that there is a $(u,v)$-path of cost at most $(a_P, a_P)$.
Then there is a \textbf{deterministic} distributed algorithm $A$ such that
for each $P=(u,v)\in \mathcal P $ we will be able to find a feasible $(u,v)$-path using $A$.
\end{theorem}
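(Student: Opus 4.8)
The plan is to route every ordered pair along a shortest path with respect to the single \emph{balanced} composite metric $w:=w_1+w_2$; note that $w\ge 0$, so Dijkstra applies. Concretely, $A$ works as follows. Using the (globally available) topology, for every destination $v$ it computes the $w$-distances $d_w(\cdot,v)$ and then fixes, for each node $x\neq v$, a next hop toward $v$: among the out-neighbours $y$ of $x$ with $w(xy)+d_w(y,v)=d_w(x,v)$ it picks the one of smallest label. The routing table at $x$ stores this choice for destination $v$. This is deterministic, it produces tables of the prescribed (destination, next hop) form, and it does not even use the numbers $a_P$; so it is a distributed routing algorithm in the sense of the paper.

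Next I would verify the feasibility guarantee, which is where the hypothesis of \emph{non-negative} costs does all the work. Fix $P=(u,v)\in\mathcal P$; by assumption there is a $(u,v)$-path $Q$ with $w_1(Q)\le a_P$ and $w_2(Q)\le a_P$, hence $w(Q)\le 2a_P$. The path $R$ that $A$ actually traces from $u$ to $v$ is a $w$-shortest $(u,v)$-path, so $w(R)\le w(Q)\le 2a_P$. Since $w_1,w_2\ge 0$, the identity $w_1(R)+w_2(R)=w(R)\le 2a_P$ forces $w_1(R)\le 2a_P$ and $w_2(R)\le 2a_P$; that is, $R$ meets the QoS restriction $\le(2a_P,2a_P)$. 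The point is that the balanced metric is self-tuning per pair: the composite cost of $R$ cannot exceed that of $Q$, and a pair of non-negative numbers summing to at most $2a_P$ has each coordinate at most $2a_P$.

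The only remaining thing to check is that the next-hop choices are mutually consistent, i.e. that following them from any source $u$ toward a destination $v$ really produces a $w$-shortest $(u,v)$-path and never loops. This is the standard optimality-principle argument: with the smallest-label tie-break above, if $x$'s next hop toward $v$ is $y$ then $d_w(y,v)<d_w(x,v)$ and $x$'s chosen path equals arc $xy$ followed by $y$'s chosen path, so the next-hop pointers toward $v$ form an in-tree $T_v$ rooted at $v$ and the traced path is shortest. I expect this bookkeeping about tie-breaking to be the only mildly technical part; the mathematical content is the one-line non-negativity observation above, which also explains why the factor $2$ in the constraint is exactly right — Theorem~\ref{th1} shows it cannot be lowered.
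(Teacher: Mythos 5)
Your proposal is correct and follows essentially the same route as the paper's proof: define the composite metric $w=w_1+w_2$, route along $w$-shortest path trees, and observe that the shortest path's composite cost is at most $2a_P$, so by non-negativity each coordinate is at most $2a_P$. The additional care you take with tie-breaking and next-hop consistency is a welcome refinement but does not change the argument.
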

\begin{proof} We define $A$ as follows. First, we define a single weight on each edge $e=(x,y)\in E$ with weights $(w^1_e,w_e^2)$ as follows: $w_e = w_e^1+w_e^2$. Then we run a standard one-to-all shortest path algorithm (e.g., Dijkstra) on the graph and route according to the obtained shortest path trees.

We claim that the resulting path satisfies the original constraint. Indeed, assume that we found a shortest $(u,v)$-path $S$ for $E = (u,v)\in \mathcal P$. The path given by the assumption has weight at most $a_P+a_P = 2a_P.$ Thus the found path has weight at most $2 a_P$, and its weight in each of the constraints is at most $2a_P$, which means that it is feasible for the original constraints.
\end{proof}

If we look into the case when the constraint is $(w_1,w_2)<(2,2)$ and there is a path of cost at most $(1,1)$, then we show that the bound given in Theorem~\ref{th1} on the proportion of found feasible paths is again best possible.

\begin{theorem}\label{th3}
Let $G = ([n], E)$ be a network with two non-negative costs on each edge. Fix a constraint for each pair $S =(u,v)$ of the form $<(2a_P,2a_P)$, which means that we need to find a $(u,v)$-path of the cost strictly smaller than $2a_P$ in both costs. Let $\mathcal P\subset {[n]\choose 2}$ be a collection of pairs $P=(v,w)$ such that there is a $(u,v)$-path of cost at most $(a_P, a_P)$.
Then there is a \textbf{randomized} distributed algorithm $A$ such that
for at least half of pairs $P=(u,v)\in \mathcal P $ in expectation we will be able to find a feasible $(u,v)$-path using $A$.

As a consequence, there exists a \textbf{deterministic} distributed algorithm $A'$ that guarantees that for at least half of pairs $P=(u,v)\in \mathcal P $ we will be able to find a feasible $(u,v)$-path using $A'$.
\end{theorem}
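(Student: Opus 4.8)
The plan is to localize the problem to a single destination and then combine the composite metric from Theorem~\ref{th2} with a randomized resolution of exactly the kind of conflict that produced the loss in Theorem~\ref{th1}. Fix a destination $v$ and set $\mathcal P_v=\{u:(u,v)\in\mathcal P\}$. Since a distributed routing table assigns one next hop per node, the nodes that actually reach $v$ form an in-tree toward $v$; different destinations are handled independently, and $\sum_v |\mathcal P_v|=|\mathcal P|$, so it suffices to produce, for each $v$, a random in-tree $T_v$ for which the expected number of $u\in\mathcal P_v$ whose $T_v$-path is feasible is at least $|\mathcal P_v|/2$.

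As a first layer, run the construction of Theorem~\ref{th2}: put $w_e=w^1_e+w^2_e$ and build a $w$-shortest-path in-tree rooted at $v$. For $P=(u,v)\in\mathcal P_v$ the $w$-distance from $u$ to $v$ is at most $2a_P$, so the chosen $u$--$v$ path already has both coordinates at most $2a_P$; it can violate the \emph{strict} bound $<(2a_P,2a_P)$ only when its $w$-length equals exactly $2a_P$ and one coordinate vanishes, i.e. its cost is $(2a_P,0)$ or $(0,2a_P)$. Call such a pair \emph{tight}. For a tight pair the given path of cost $\le(a_P,a_P)$ has $w$-length $\le 2a_P$, hence is itself $w$-shortest, and its cost must then be exactly $(a_P,a_P)$ --- a feasible, balanced alternative that lies in the shortest-path DAG of $v$. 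Thus every non-tight pair is satisfied by \emph{any} $w$-shortest-path in-tree, and the whole difficulty is concentrated on the tight pairs, each of which has a feasible co-optimal witness.

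For the second layer I would choose among the $w$-shortest-path in-trees at random, merging the balanced witnesses of the tight pairs: at each node $x$ of the shortest-path DAG, look at the next hops prescribed by the witnesses routed through $x$ and commit to one of them by a (suitably weighted) random choice, leaving the already-satisfied non-tight pairs untouched since we stay inside the DAG. On the gadget of Theorem~\ref{th1} this degenerates to a single fair coin per destination and gives each pair success probability exactly $1/2$, which is why $1/2$ is the right target. The main obstacle is the analysis of a witness that gets \emph{redirected} away from its prescribed hop at a randomly chosen node: one has to show that the concatenation of its cheap prefix with the in-tree path it falls onto is still feasible with probability at least $1/2$, using the slack between the witness cost $\le(a_P,a_P)$ and the constraint $<(2a_P,2a_P)$ together with the fact that every path in the DAG has $w$-length equal to the distance. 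Granting this, linearity of expectation over all pairs yields the randomized statement, and the deterministic algorithm $A'$ follows by the probabilistic method: some fixing of the random bits attains at least $|\mathcal P|/2$ feasible pairs, and that fixed table is a deterministic distributed routing.
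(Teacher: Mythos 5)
Your setup matches the paper's: the composite metric $w=w_1+w_2$, the observation that a found $w$-shortest path can only violate the strict constraint when its cost is exactly $(2a_P,0)$ or $(0,2a_P)$, the fact that the witness of a tight pair is then co-optimal with cost exactly $(a_P,a_P)$, and the final derandomization by linearity of expectation are all exactly the paper's ingredients. But the one step you explicitly defer --- showing that the randomized next-hop rule yields a feasible path with probability at least $1/2$ for each pair --- is the entire content of the theorem, and your sketch does not contain a rule precise enough to even attempt that analysis. ``Commit to one of the next hops prescribed by the witnesses routed through $x$ by a suitably weighted random choice'' is not an algorithm: redirection is dangerous precisely when the prefix accumulated so far has a zero coordinate and the suffix you fall onto has the same coordinate zero, and because the packet is re-randomized at every subsequent node of the DAG, these failure probabilities can compound across hops. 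A generic random choice among witness next hops does not obviously keep the failure probability below $1/2$; the $1/2$ on the Theorem~\ref{th1} gadget comes from a single coin flip, but in deeper DAGs a packet that ``escapes'' at one node can be pushed back onto a degenerate suffix at the next one.

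The paper closes this gap with a specific rule that your proposal is missing: at each node, partition the co-optimal $w$-shortest paths to $v$ into those with both coordinates positive, those with first coordinate zero, and those with second coordinate zero; \emph{always} prefer a both-positive path when one exists, and only when every co-optimal path has a vanishing coordinate and both zero-classes are present flip a fair coin between the two \emph{classes} (not between individual witnesses). The analysis then fixes an infeasible outcome $R$ of cost, say, $(0,2a_P)$, identifies the first node $x$ on $R$ at which no both-positive co-optimal path exists, shows $x\neq u$ and that the prefix to $x$ has cost $(0,\alpha)$ with $\alpha>0$, deduces that the class of paths with second coordinate zero is nonempty at $x$ (otherwise the both-positive path used at the predecessor of $x$ would have first coordinate zero), and charges the failure to the coin at $x$. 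If you want to complete your writeup you essentially have to reinvent this rule and this ``first bad node'' argument; without it the proposal is a correct reduction of the problem to its hard core, not a proof.
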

Both this and the previous theorems give certain deterministic algorithms, although the last theorem only guarantees its existence. Constructing such routing tables could be made algorithmically efficient using standard techniques of derandomization that we avoid here.
\begin{proof} We first define a randomized algorithm $A'$ as follows. We first form a composite metric $w = w_1+w_2$ on each edge. Assume that a packet is at vertex $x$ and is being routed to a vertex $v$. For $x$, we have a list $\mathcal S_v:=\{S_1,\ldots, S_k\}$ of different paths to $v$ such that $\mathcal S$ consists of all paths that have minimum weight in $w$-metric. Then we check the weights $(w_1,w_2)$ of these paths. If one of these paths has positive weights in both coordinates then we route the packet to the next hop on this path. If all paths have $0$ weight in the first coordinate then we route along any of them. We do the same if all paths have $0$ weight in the second coordinate. If there are both paths with $0$ in the first coordinate and $0$ in the second coordinate then we choose an element $i\in \{1,2\}$ uniformly at random and route the packet along (one of) the path that has $0$ weight in the $i$-th coordinate.

Let us analyze the situations when such a randomized algorithm fails to find a feasible path. Let $R$ be a found $(u,v)$-path, where $P =(u,v)\in \mathcal P$, and assume that $R$ is not feasible. Clearly, the composite weight of $R$ is at most $2a_P$, since we assumed that there is a path of weight at most $(a_P,a_P)$, which has composite weight $2a_P$. Thus, $R$ must have weight of one of the forms: $(0,2a_P)$ or $(2a_P,0)$. Without loss of generality, assume that it has the form $(0,2a_P)$. Assume that $x$ is the first vertex on the path such that  all paths in $\mathcal C_x$ have weights with $0$ in one coordinate. First, we note that $x\ne u$ since there is a feasible path from $u$ to $v$.
Thus, the part of $R$ to $x$ must have weight $(0, \alpha)$ for $\alpha>0$. Second, since there is a feasible path from $u$ to $v$, the set $\mathcal C_x$ must contain a path with weights of the form $(2a_P-\alpha, 0)$. Indeed, this is by the definition of $x$: the vertex $x'$ just before $x$ on $R$ had a path in $\mathcal C_{x'}$ with both positive weights. But if all paths in $\mathcal C_x$ have weight $(0,2a_P-\alpha)$ then the weights of all paths in $\mathcal C_{x'}$ through $x$ have $0$ weight in the first coordinate, a contradiction. Given that $\mathcal C_x$ has a path with weight $(2a_P-\alpha,0)$, we had a $1/2$ chance of choosing it using our randomized next hop algorithm, which means that at this step we had a probability at least $1/2$ of choosing not $R$ but a feasible path. Thus, the probability of choosing an infeasible path $R$ is at most $1/2$.

Summarizing, for any given $P =(u,v)\in \mathcal P$, the probability of finding a feasible path for $P$ using $A'$ is at least $1/2$. Using linearity of expectation, the expected number of feasible paths that $A'$ generates, is at least $\frac 12 |\mathcal P|$. Therefore, there exists a particular deterministic choice of next hop routing rules that results in at least that many feasible paths.

\end{proof}

We can consider the following modification of the construction in the Figure~\ref{not100-2}, showing that not only 50\% discovery rate,
but even 1\% is impossible in general case, although the network becomes less natural.

Consider the following network $G_k = (V,E)$ with two costs on each edge: $V = \{c,v_1,\ldots, v_{kn}\}$
and $E = \cup_{i=1}^{kn} E_i$, where $E_i$ consists of $k+1$ edges connecting $c$ and $v_i$:
$k$ edges from $c$ to $v_i$ with costs $(j,k-1-j)$ for $j=0,\ldots, k-1$, respectively,
and one edge form $v_i$ to $c$ with cost $(\ell,k-1-\ell)$ if $\ell n < i \le (\ell+1)n$.

Similarly to Theorems \ref{th2} and \ref{th3}, we can prove the following result.
\begin{theorem}\label{th4}
 Assume that $A$ is a distributed routing algorithm that aims to find paths between any two pairs of vertices in $G_k$ that satisfy requirements $(w_1,w_2)< (k,k)$. Then

 \begin{itemize}
   \item[(i)] There is a satisfying path between any two vertices
   \item[(ii)] the algorithm $A$ cannot find satisfying paths for more than $(kn+1)^2-kn((k-1)n-1)$ ordered pairs of vertices out of $(kn+1)^2$. That is, asymptotically we can find satisfying paths for no more than $1/k+o(1)$ fraction of pairs.
 \end{itemize}
\end{theorem}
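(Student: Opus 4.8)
The plan is to run the argument of Theorem~\ref{th1} in the more general network $G_k$, with the counting adjusted to $k$ parallel forward edges and $k$ blocks of $n$ vertices each.

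For part~(i) I would simply exhibit a cheap path for every ordered pair. For $(c,v_i)$ any parallel edge has cost $(j,k-1-j)$ with both coordinates $\le k-1<k$; for $(v_i,c)$ the unique back-edge has cost $(\ell,k-1-\ell)$, again $<(k,k)$. For $(v_i,v_j)$ with $v_i,v_j\neq c$, say $v_i$ sits in block $\ell$ (that is, $\ell n<i\le(\ell+1)n$): concatenate the back-edge $v_i\to c$ of cost $(\ell,k-1-\ell)$ with the parallel edge $c\to v_j$ of cost $(k-1-\ell,\ell)$ --- this edge exists because $k-1-\ell\in\{0,\dots,k-1\}$ --- for a total cost of exactly $(k-1,k-1)<(k,k)$. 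Diagonal pairs $(x,x)$ use the empty path.

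For part~(ii) the structural heart, exactly as in Theorem~\ref{th1}, is a bottleneck at $c$: every $v_i$ has a unique out-edge, so a packet originating outside $c$ reaches $c$ and is then forwarded by $c$'s routing entry for the destination. Fix a destination $v_j$ in block $\ell_j$. If $c$'s next hop for $v_j$ is some edge $(c,v_m)$ with $m\neq j$, then the packet cycles $v_m\to c\to v_m\to\cdots$ and never arrives, so in that case the only served ordered pair with destination $v_j$ is $(v_j,v_j)$; hence we may assume $c$'s next hop for $v_j$ is one of the $k$ direct edges, of cost $(j',k-1-j')$. A source $v_i$ in block $\ell_i$ is then routed along $v_i\to c\to v_j$ with cost $(\ell_i+j',\,2(k-1)-\ell_i-j')$, which is $<(k,k)$ in both coordinates iff $\ell_i+j'\le k-1$ and $\ell_i+j'\ge k-1$, i.e.\ iff $\ell_i=k-1-j'$. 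So for destination $v_j$ the algorithm serves only the sources in the single block $\ell_i=k-1-j'$ (at most $n$ of them), together with $c$ and with $v_j$ itself --- at most $n+2$ sources; for the destination $c$ every $v_i$ is served by its own back-edge, giving $kn+1$. Summing over all $kn+1$ destinations bounds the number of served ordered pairs by $(kn+1)+kn(n+2)=kn^2+3kn+1$, and expanding $(kn+1)^2-kn((k-1)n-1)=kn^2+3kn+1$ matches this; dividing by $(kn+1)^2$ yields $\tfrac1k+o(1)$.

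The only delicate point --- the analogue of the one subtlety in the proof of Theorem~\ref{th1} --- is making the bottleneck argument watertight: ruling out non-direct next hops at $c$ (handled above via the forced cycle), and doing the bookkeeping for the diagonal pairs and for the case in which $v_j$ itself lies in the served block $k-1-j'$ (which only lowers the count, so the adversary gains nothing from it). Since the bound is exact rather than merely asymptotic, getting these off-by-$n$ details right is the main thing to watch; the identity $\ell_i+j'=k-1$ and the final summation are then routine arithmetic.
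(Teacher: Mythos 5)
Your proof is correct and is essentially the argument the paper intends: the paper omits the proof of Theorem~\ref{th4} entirely, merely asserting it follows "similarly" to the earlier results, and your write-up is exactly the natural generalization of the bottleneck-at-$c$ argument from Theorem~\ref{th1}, with the key observation that serving destination $v_j$ via the forward edge of cost $(j',k-1-j')$ forces $\ell_i+j'=k-1$ and hence restricts served sources to a single block of $n$ vertices. Your count $kn^2+3kn+1=(kn+1)^2-kn((k-1)n-1)$ and the $1/k+o(1)$ asymptotics check out.
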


By making $k$ large we can make the discovery rate arbitrarily small.

We also note that we can avoid using multiple edges in $G_k$ by `duplicating' each vertex $v_i$, one copy per edge. The orientation seems to be more important, although similar, but weaker, examples, are possible even without orientation.

It is also interesting to figure out, what is the dependence between the discovery rate and the `slack' that the restrictions leave. Are the examples above are optimal in that sense?

\section{Linear combination of weights}
We have shown that  in general case it is impossible to guarantee  distributed routing with $100\%$ discovery rate under several constraints
even if a feasible path is present for every source and destination pair.

In this section, we are going to analyze the {\it linear} class of algorithms that, for, say, two weights $W_1(e)$ and $W_2(e)$, assign a weight $pW_1(e)+(1-p)W_2(e)$ to the edge $e$, where $p\in [0,1]$. The case of more weights is analogous.

Assume that the costs, or weights, $W_i(e), 1\leq i\leq k$  are assigned to the graph edges $e$.
Finding multi-constrained paths for constraints $C_i$ on the $i$-th metric amounts to finding paths that satisfy
$$
\sum_{e\in path} W_i(e) < C_i, \quad \forall i,
$$
for each path.

A pair of nodes is called {\it satisfying} if such a path is found by the algorithm,
{\it non-satisfying} if we prove that such a path does not exist, and {\it uncertain} if the algorithm can not decide whether such a path exists or not.

 Varying parameter $p$ within $[0,1]$ and considering corresponding shortest path in the composite metric does not always allow to find a path that satisfies constraints even if it exists. The following example is similar to the one that has already been mentioned in \cite{Puri}.
For the weight$\, =p W_1+(1-p) W_2$, see Figure \ref{not100case}:
\begin{itemize}
\item $p>0.5$, top path is the shortest,
\item $p=0.5$, both top and bottom paths give the same value and are the shortest,
\item $p<0.5$, bottom path is the shortest.
\end{itemize}

Middle path, which is the only one that satisfies both constraints $\leq 1$, would never be found.

\begin{figure}[!h]
\centering
\begin{tikzpicture}
  \coordinate (O) at (0,0) node [left] {s} ;
  \coordinate (A) at (3,0) ;

  \fill (O) circle (2pt);
  \fill (A) circle (2pt);

  \draw[] (O)-- node [above] {$0.9,0.9$} (A);

  \draw[] (O) to [bend left=70] node [above] {$0.1,1.1$}  (A);
  \draw[] (O) to [bend right=70]node [below] {$1.1,0.1$} (A);

\end{tikzpicture}
\caption{Middle path satisfies constraint would never be found by SPF with a linear composite metric.}

\label{not100case}
\end{figure}
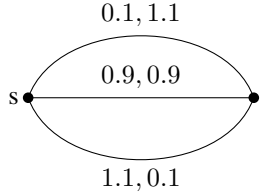

\subsection{Optimal parameter for a composite metric}

A well-known approach is to consider a linear combination of costs as a new edge weight and find the corresponding shortest paths \cite{Jaffe,Puri}.
Deploying the shortest path algorithms for a composite metric $p_1 W_1+p_2 W_2+\ldots + p_k W_k$, one would like to expect that as many as possible found shortest paths would be feasible for all constraints.

First, we establish some general theoretical results to illustrate the idea and develop useful tools. We show that some proportions are better than  others in that particular sense. We propose to maximize the probability that a path satisfies the constraints while the path is the shortest for metric $p_1 W_1+p_2 W_2+\ldots + p_k W_k$ as a function of parameter ${\bf p}=(p_1,\ldots,p_k), p_i\geq 0, \sum_i p_i=1$.

 We provide theoretical results for the following ``general transmitting scheme'' graph, Figure \ref{general}, where each edge between source and destination is equipped with some random costs: $\mbox{weight}_1=X_i$, $\mbox{weight}_2=Y_i,\ldots, \mbox{weight}_k=Z_i$ where $1\leq i \leq n$. Thus, one can choose optimal ${\bf p}$ in the sense of maximising $P(A)$ where $A$ is ``shortest path satisfies the constraints''. Obviously, by the law of total probability
 \begin{equation}
 P(A)= \sum_{i=1}^{n}P(AB_i)
 \label{totalprobab}
 \end{equation}
 where $B_i$ is ``$i$-th path is the shortest for metric $p_1 W_1+p_2 W_2+\ldots + p_k W_k$''. All the summands in (\ref{totalprobab}) are equal due to symmetry and maximizing one summand means maximizing $P(A)$ .

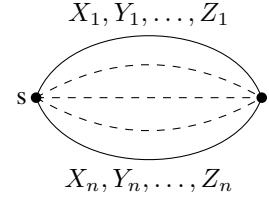
\begin{figure}[!h]
\centering
\begin{tikzpicture}
  \coordinate (O) at (0,0) node [left] {s} ;
  \coordinate (A) at (3,0) ;

  \fill (O) circle (2pt);
  \fill (A) circle (2pt);

  \draw[dashed] (O)--(A);

  \draw[dashed] (O) to [bend left=30] (A);
  \draw[dashed] (O) to [bend right=30] (A);
  \draw[] (O) to [bend left=70] node [above] {$X_1,Y_1, \ldots, Z_1$}  (A);
  \draw[] (O) to [bend right=70]node [below] {$X_n,Y_n, \ldots, Z_n$} (A);

\end{tikzpicture}
\caption{``General transmitting scheme'' graph}
\label{general}
\end{figure}

\begin{theorem}\label{thp}
Let $\{X_i\}, \ldots, \{Z_i\}, 1\leq i \leq n,$ be independent random variables
from $k$ absolutely continuous distributions on the segments with bounded densities.
Then there exists ${\bf p}=(p_1,\ldots,p_k)$ that maximize the probability
$P(A)$.
\end{theorem}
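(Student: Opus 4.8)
The plan is to obtain the maximizer by a straightforward compactness argument. The feasible set
\[
\Delta_{k-1} \;=\; \Big\{\, {\bf p} = (p_1,\dots,p_k) \;:\; p_i \ge 0,\ \textstyle\sum_{i} p_i = 1 \,\Big\}
\]
is a compact subset of $\mathbb R^{k}$, so it suffices to prove that $P(A)$, regarded as a function of ${\bf p}$, is continuous on $\Delta_{k-1}$; the extreme value theorem then yields a point where the maximum is attained. By the law of total probability (\ref{totalprobab}) together with the exchangeability of the $n$ parallel edges in Figure~\ref{general}, all summands coincide and $P(A) = n\cdot g({\bf p})$ with $g({\bf p}) := P(AB_1)$, as already noted before the statement. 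Hence it is enough to show that $g$ is continuous.

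First I would rewrite $g$ as an expectation. Let $\mathbf C_i = (C_i^1,\dots,C_i^k)$ denote the cost vector of the $i$-th edge, so that the $\mathbf C_i$ are i.i.d.\ with absolutely continuous coordinates and the composite length of edge $i$ equals $\langle {\bf p}, \mathbf C_i\rangle = \sum_j p_j C_i^j$. Writing $\gamma_j$ for the constraint on the $j$-th metric, and recording that composite-length ties occur with probability zero (so ``shortest'' is unambiguous up to a null set),
\[
g({\bf p}) \;=\; \mathbb E\Big[\, \mathbf 1\{\, C_1^{j} < \gamma_j \ \text{for all } j \,\} \cdot \prod_{i=2}^{n} \mathbf 1\{\, \langle {\bf p}, \mathbf C_1 - \mathbf C_i\rangle < 0 \,\} \,\Big].
\]
Note that the first indicator does not depend on ${\bf p}$; only the $n-1$ indicators of ``edge $1$ beats edge $i$'' do.

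To prove continuity, fix a sequence ${\bf p}_m \to {\bf p}$ in $\Delta_{k-1}$ and apply bounded convergence to the integrand $h_{{\bf p}}$ above (it is bounded by $1$). For a realization $\omega$ at which $\langle {\bf p}, \mathbf C_1(\omega) - \mathbf C_i(\omega)\rangle \ne 0$ for every $i \ge 2$, continuity of the linear map ${\bf p}\mapsto \langle {\bf p}, \cdot\rangle$ forces $h_{{\bf p}_m}(\omega) = h_{{\bf p}}(\omega)$ for all large $m$. The set of such $\omega$ has full probability: for a fixed $i$, pick a coordinate $j_0$ with $p_{j_0} > 0$ (one exists since the $p_j$ sum to $1$); conditioning on all the costs except $C_1^{j_0}$ turns $\langle {\bf p}, \mathbf C_1 - \mathbf C_i\rangle$ into an affine function of $C_1^{j_0}$ with nonzero slope $p_{j_0}$, which therefore vanishes with conditional probability $0$ by absolute continuity of $C_1^{j_0}$; integrating out and taking a union bound over $i$ gives $P(\exists\, i \ge 2 : \langle {\bf p}, \mathbf C_1 - \mathbf C_i\rangle = 0) = 0$. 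The very same estimate, applied to $\langle {\bf p}, \mathbf C_i - \mathbf C_{i'}\rangle$, is what makes composite-length ties among edges null, justifying the remark above. Hence $h_{{\bf p}_m} \to h_{{\bf p}}$ almost surely, and bounded convergence gives $g({\bf p}_m) \to g({\bf p})$.

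I expect the only delicate point to be this continuity step — showing that $g$ is genuinely continuous, not merely semicontinuous — which is precisely where absolute continuity of the cost distributions is used, to discard the measure-zero ``boundary'' configurations at which some $\langle {\bf p}, \mathbf C_1 - \mathbf C_i\rangle$ flips sign. The boundedness of the densities and the fact that each distribution lives on a segment are not needed for mere existence; they appear to be kept in reserve for the finer, quantitative claims (such as convexity of the optimal composite metric) addressed next. Everything else — passing to the simplex, symmetry, and compactness — is routine.
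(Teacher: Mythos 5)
Your proposal is correct, and it shares the paper's overall skeleton: reduce to maximizing the single summand $P(AB_1)$ over the compact simplex $\Delta_{k-1}$ and invoke the extreme value theorem, so that everything hinges on continuity of $P(AB_1)$ as a function of ${\bf p}$. Where you genuinely diverge is in how that continuity is established. The paper writes $P(AB_1)$ as an explicit double integral $\int\!\!\int (1-pr(x,y;{\bf p}))^{n-1}\,dG_1(x)\,dG_2(y)$ and disposes of continuity in one sentence, ``due to the bounded probability densities''; you instead represent $P(AB_1)$ as the expectation of a product of indicators and derive continuity from almost-sure convergence of the integrand plus bounded convergence, the crux being that for each fixed ${\bf p}$ the tie events $\langle {\bf p}, \mathbf C_1-\mathbf C_i\rangle=0$ are null --- and your conditioning argument for that is sound, since some coordinate $p_{j_0}$ is strictly positive on the simplex and absolute continuity of $C_1^{j_0}$ kills the resulting affine equation. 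Your route is both more complete (the paper never actually verifies that $pr(x,y;{\bf p})$ varies with ${\bf p}$ in a way that lets one pass continuity through the $(n-1)$-st power and the integral) and mildly more general, since, as you note, boundedness of the densities and compact support are not needed for bare existence of a maximizer. The only point worth making explicit is that the identity $P(A)=n\,P(AB_1)$ presumes the events $B_i$ are essentially disjoint; that is exactly your null-tie observation applied to pairs of distinct edges, so nothing is missing.
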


\begin{proof}
For the sake of simplicity we consider $k=2$ that corresponds to $2$-dimensional case, see Figure \ref{slope} representing the joint
probability space. Paths $(X_i,Y_i)$ are represented by black dots with the corresponding coordinates.
The oblique line $p_1 x+p_2 y= const$ which corresponds to the cost of the shortest path,
and for sure $(x_{sh},y_{sh})$, the shortest path itself, belongs to that line. Its slope depends on $p_1, p_2$.
We should note that which path becomes the shortest one depends on $p_1,p_2$ of course.
If some path is not the shortest one, then $p_1 x+p_2 y > p_1 x_{sh}+p_2 y_{sh}$, and the corresponding black dot lies above
the oblique line.
For a general $k$-dimensional case we should modify the Figure \ref{slope} and use a hyperplane instead of the oblique line.

\begin{figure}[h]
\centering
\includegraphics[width=7cm]{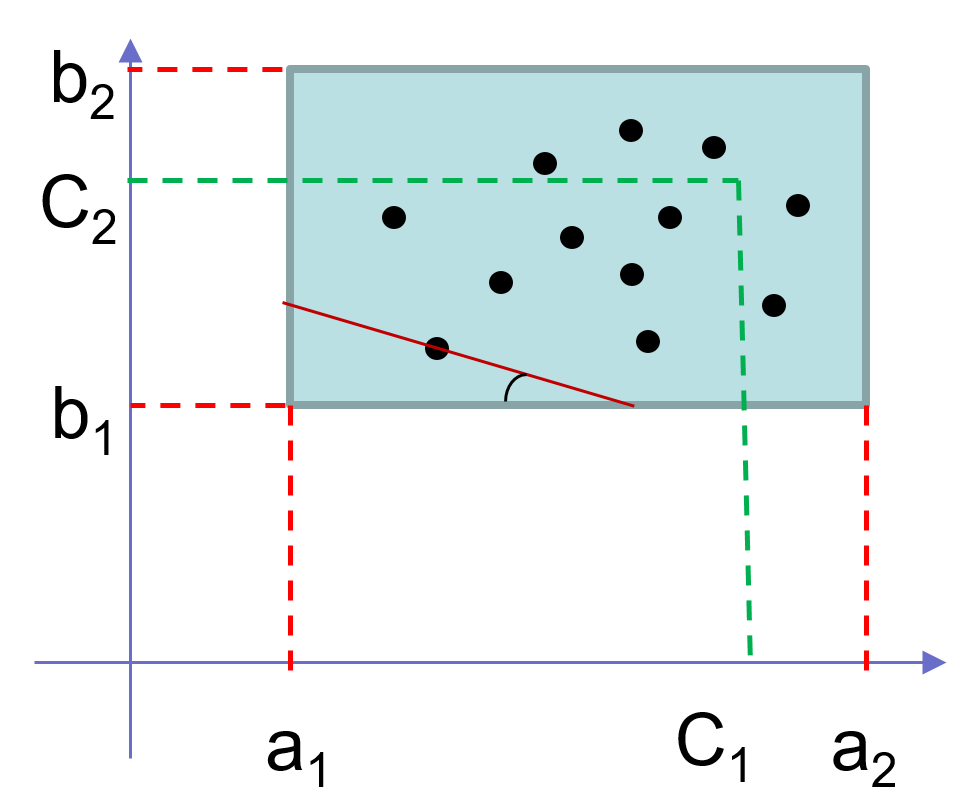}
\caption{ Example for $k=2$. Costs $W_1$ and $W_2$ are randomly taken from the intervals $[a_1, a_2]$ and $[b_1, b_2]$.
	The constraints $C_1$ and $C_2$ bound the area the feasible paths should belong.} \label{slope}
\end{figure}

In order to maximize the number of feasible paths the algorithm can discover for a given ${\bf p}=(p_1,p_2)$,
we should maximize the probability $P(A B_1)$
that (1) path satisfies the constraints and (2) the path is the shortest one for the metric $p_1 W_1+p_2 W_2$,
i.e.
$$
P(A B_l)= \int_{a_1}^{C_1} \int_{b_1}^{C_2} (1-pr(x,y;p))^{n-1} dG_1(x) dG_2(y),
$$
where
$pr(x,y;{\bf p})$ is the probability for the path to find itself in the
left bottom triangle bounded by the oblique line
$$
p_1 x+p_2 y= const(shortest\ path).
$$

Clearly, due to the bounded probability densities it is a continuous function of parameter ${\bf p}\in (k-1)$-dimensional simplex.
Thus, it attains maximum at some point of a compact set.
\end{proof}

If seems to be obvious that if some path is the shortest one for a metric $p_1 W_1+p_2 W_2$
and we normalize $p_1$ and $p_2$ as $p$ and $p-1$, where $p=\frac{p_1}{p_1+p_2}$,
the path still be the shortest one for the metric $p W_1 + (1-p) W_2$.
In addition, for a $2$-dimensional case with uniform distribution of costs
$G_1[a_1,a_2]$ and $G_2[b_1,b_2]$,
if the constraints $C_1, C_2$ are sufficiently small in comparison to $a_2$ and $b_2$,
then $p$ can be easily calculated. Optimal $p$ corresponds to such an oblique line
in the Figure~\ref{slope} that maximizes the probability of the pentagon, i.e. minimizes
the area of the triangle, i.e.
$$
\frac{p}{1-p} = \frac{C_2-b_1}{C_1-a_1}\ ,
$$
which can be rewritten as
\begin{equation}
p=\frac{C_2-b_1}{C_2-b_1+ C_1-a_1} \label{bestp}.
\end{equation}
The parameter $p$ gives maximum discovery rate of feasible paths by a shortest path
search algorithm with a linear weighted combination of costs.
It should be noted that no guarantee for the distributed routing discovery rate is possible
(Theorems \ref{th1}, \ref{th2}).
Multi-constrained distributed routing algorithms have to use some additional information
(e.g. to find out where the packets came from) in order
to make its routing decision.

It should be noted that the existence of the best parameter for a linear combination of weights
has been derived under several assumptions of Theorem~\ref{thp}. The assumption of identical distribution of path costs $X_1$, $X_2$, etc. is not really important.
Although it allow us to make calculations for the Theorem~\ref{thp} and (\ref{bestp}) in a more transparent way.
The least natural one is the assumption of independence of path costs $X_1$, $X_2$, etc.
Even if costs of graph edges are random and independent for some scenario,
hardly can we expect that all the shortest paths are disjoint.
Shortest paths from some source node to different destinations most likely to have
some edges in common and therefore we should not assume the path costs $X_i$ completely independent.
Nevertheless, Theorem~\ref{thp} gives us a motivation to expect "good behavior"
of the probability $P(A)$ as a function of parameter ${\bf p}$, which could at least simplify our search
for an optimal value of ${\bf p}$ via an iterative algorithm.

\subsection{Concavity}



For some source and destination vertices, let $x_{sh}({\bf p})$, $y_{sh}({\bf p})$, $\ldots$, $z_{sh}({\bf p})$ be the weights of the shortest path that minimize
\begin{equation}
p_1 W_1+\ldots + p_k W_k.
\label{metric}
\end{equation}
Let us analyze, what kind of function this minimum is. For each fixed path that has costs $x,y,\ldots,z$, the function $p_1x+p_2y+\ldots+p_kz$ is a linear function of $\mathbf{p}$, whose graph is a hyperplane. The minimum \eqref{metric} is a pointwise minimum of such functions over all possible paths between two vertices. In other words, the set under the graph of the minimum \eqref{metric} is an intersection of several halfspaces. Each facet (or segment in 2-dimensional case) corresponds to a particular  path, while intersections of facets (vertices in 2-dimensionnal case) correspond to two different paths with different values of metrics, whose values of the composite metric coincide.



\begin{figure}[tp]
\centering
\includegraphics[scale=0.7]{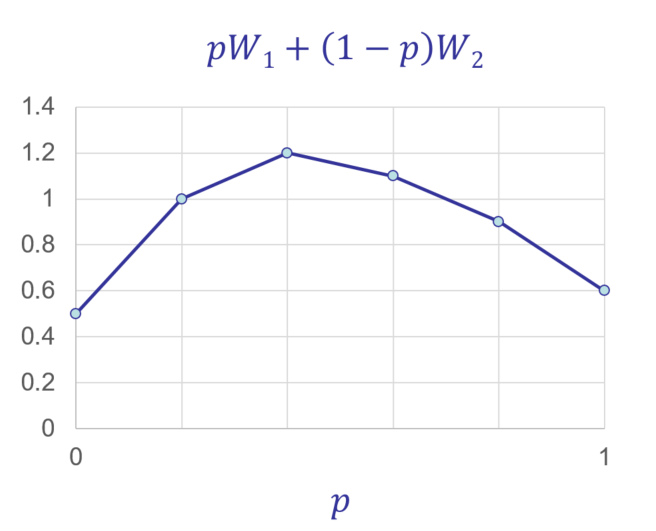}
\caption{Example of a function $p x_{sh}+(1-p) y_{sh}$. Positive slope means that $x_{sh}>y_{sh}$, negative that $x_{sh}<y_{sh}$, zero slope that $x_{sh}=y_{sh}$. Theorem \ref{thmconcave} states it must be concave.}\label{concave}
\end{figure}

\begin{theorem}
Let ${\bf p}\in unit\ (k-1)-simplex$.
The following function of ${\bf p}$:
$$p_1 x_{sh}({\bf p})+\ldots +p_k z_{sh}({\bf p})$$ is concave.
\end{theorem}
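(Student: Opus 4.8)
The plan is to recognize the function in the statement as a pointwise minimum of linear functions of $\mathbf{p}$, and then invoke the elementary fact that such a minimum is concave. First I would fix the source and destination vertices once and for all and let $\mathcal{P}$ denote the set of all simple paths between them; since the graph is finite, $\mathcal{P}$ is finite. For each $P\in\mathcal{P}$ write $\mathbf{c}^P=(x^P,\dots,z^P)$ for the vector of its total costs in the $k$ metrics, and set $\ell_P(\mathbf{p})=p_1 x^P+\dots+p_k z^P=\langle\mathbf{p},\mathbf{c}^P\rangle$, which is a linear function of $\mathbf{p}$ whose graph is a hyperplane --- exactly the description already recorded before the statement.

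Next I would observe that, for every $\mathbf{p}$ in the simplex, the quantity $p_1 x_{sh}(\mathbf{p})+\dots+p_k z_{sh}(\mathbf{p})$ is by definition the value of the composite metric \eqref{metric} evaluated on a path that minimizes it, i.e.
\[
p_1 x_{sh}(\mathbf{p})+\dots+p_k z_{sh}(\mathbf{p})=\min_{P\in\mathcal{P}}\ell_P(\mathbf{p})=:f(\mathbf{p}).
\]
Two small points need a remark here. The individual weights $x_{sh}(\mathbf{p}),\dots,z_{sh}(\mathbf{p})$ need not be uniquely determined when several paths tie for the minimum, but the value $f(\mathbf{p})$ --- hence the function in the theorem --- is well defined regardless of how ties are broken. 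And restricting to simple paths loses nothing: since all edge weights and all $p_i$ are nonnegative, inserting a cycle never decreases the composite metric, so the minimum over all walks is attained on a simple path and $f$ is genuinely a minimum over the finite family $\mathcal{P}$.

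Finally, concavity follows directly. Given $\mathbf{p}',\mathbf{p}''$ in the simplex and $\lambda\in[0,1]$, linearity of each $\ell_P$ gives $\ell_P(\lambda\mathbf{p}'+(1-\lambda)\mathbf{p}'')=\lambda\ell_P(\mathbf{p}')+(1-\lambda)\ell_P(\mathbf{p}'')\ge\lambda f(\mathbf{p}')+(1-\lambda)f(\mathbf{p}'')$ for every $P\in\mathcal{P}$; taking the minimum over $P$ on the left-hand side yields $f(\lambda\mathbf{p}'+(1-\lambda)\mathbf{p}'')\ge\lambda f(\mathbf{p}')+(1-\lambda)f(\mathbf{p}'')$. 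Hence $f$ is concave, and in fact piecewise linear, its facets corresponding to the individual paths and the intersections of facets to pairs of paths whose composite metrics coincide --- precisely the geometric picture described before the statement (the two-dimensional case $p x_{sh}+(1-p)y_{sh}$ of Figure~\ref{concave} being the restriction of $f$ to the segment $p_1=p$, $p_2=1-p$).

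I do not expect a serious obstacle. The only points requiring care are the bookkeeping that reduces the a priori infinite family of walks to the finite family $\mathcal{P}$ of simple paths, and the observation that ties among minimizers do not affect the well-definedness of the function; everything else is the standard ``pointwise infimum of affine functions is concave'' argument applied verbatim.
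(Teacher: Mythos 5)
Your proof is correct. It differs in execution from the paper's: you apply the standard ``pointwise minimum of finitely many affine functions is concave'' inequality directly, evaluating each $\ell_P$ at a convex combination $\lambda\mathbf{p}'+(1-\lambda)\mathbf{p}''$ and taking the minimum, whereas the paper first proves the one-parameter case $k=2$ by a contradiction argument about the slopes of consecutive linear pieces (if the graph were not concave, the path owning the ``next segment'' could not actually be the shortest there) and then reduces the general $k$ to this case by restricting the function to an arbitrary segment in the simplex. Both proofs rest on the same geometric observation, which the paper already records in the text preceding the theorem (the region under the graph is an intersection of halfspaces); your route simply cashes in that observation with the two-line convexity inequality, which is arguably cleaner and avoids the picture-based case analysis. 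Your two housekeeping remarks --- that ties among minimizing paths do not affect well-definedness of the value, and that nonnegativity of weights lets you restrict to the finite family of simple paths so the minimum is over a finite set --- are points the paper glosses over, and they are worth making.
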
\label{thmconcave}
\begin{proof}
First we consider the case of $n=2$.
Let the path $(w_1,w_2)$ be the shortest one for $p\in segment$ and the path $(v_1,v_2)$ be the shortest one
for $p\in next\ segment$ (Figure~\ref{notconcave}). If graph is not concave down then the path
$(v_1,v_2)$ can not be the shortest one for $p\in next\ segment$ due to inequality
$p v_1+(1-p) v_2>p w_1+(1-p) w_2$ (red dotted line in the Figure~\ref{notconcave}).

Thus, an increase in the value of $p$ leads to a decreasing line slope which corresponds to the ``next segment'' shortest path.

The multidimensional case can be reduced to $2$-dimensional. Namely, consider any two points from the graph of the function $p_1 x_{sh}({\bf p})+\ldots +p_k z_{sh}({\bf p})$. Consider the restriction of this function to the segment that connects our two points. Then for this restriction we get the same picture as shown in the Figure~\ref{concave}, and the restriction is concave. This implies that the function itself is concave.
\end{proof}
\begin{figure}[tp]
\centering
\includegraphics[scale=0.4]{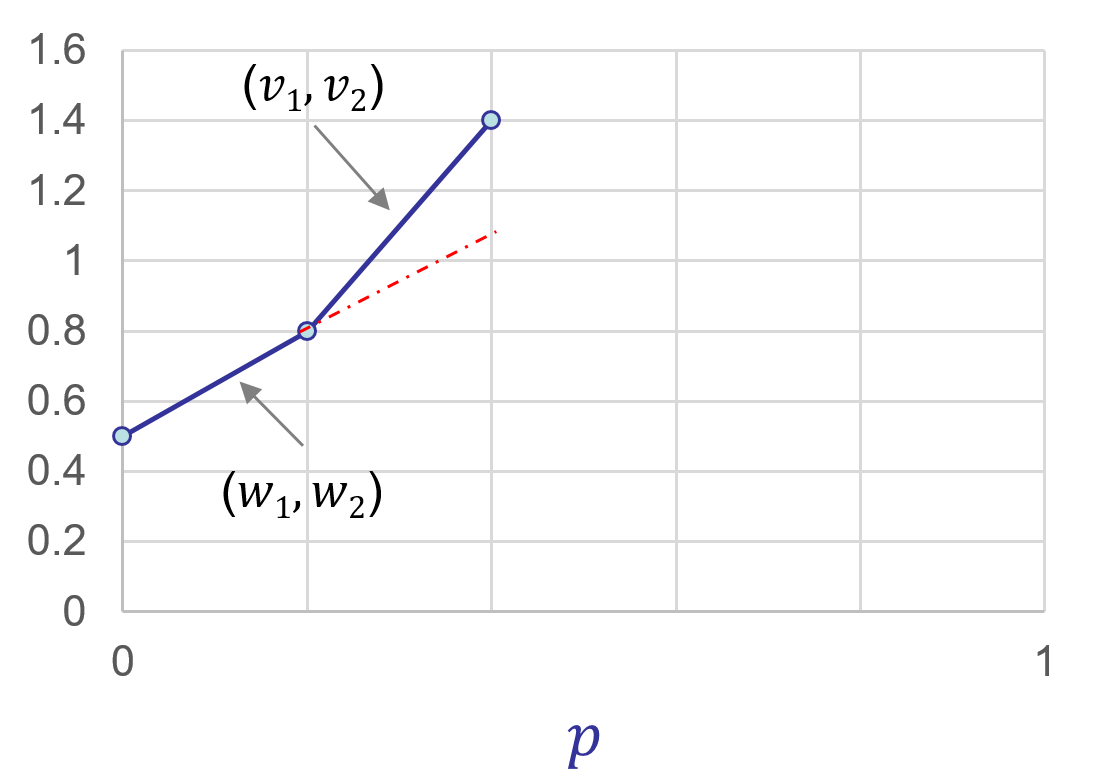}
\caption{Non-concavity leads to a contradiction.}\label{notconcave}
\end{figure}

In the next section we show that if all constraints are of the form $\leq 1$ and if $$p_1 x_{sh}+\ldots +p_k z_{sh}>1$$ for some ${\bf p}$ (as on the Figure (\ref{concave}), then there is no feasible path between the source and destination.

Consider shortest paths $(x_{sh},y_{sh})$ and $(x'_{sh},y'_{sh})$ for different values of $p$ for a $2$-dimensional case.
Note, that there are only two possibilities: $x_{sh}\leq x'_{sh},\ y_{sh}\geq y'_{sh}$ or $x_{sh}\geq x'_{sh},\ y_{sh}\leq y'_{sh}$,
otherwise one path is always shorter than the other one in the sense of the metric (\ref{metric}).

Since the slope of the line $p x_{sh}+(1-p) y_{sh}$, as a function of $p$, is equal to $x_{sh}-y_{sh}$,
Theorem~\ref{thmconcave} implies the following result which has been stated in \cite{Puri}:
\begin{corollary}
Let paths $(w_1,w_2)$ and $(v_1,v_2)$ be the shortest paths between some nodes $A,B$ in the sense of $p W_1+(1-p) W_2$
for  $p=p'$ and $p=p'',$ respectively. If $p'<p''$, then $w_1\geq v_1$ and $w_2\leq v_2$.
In other words, increasing the value of $p$ should decrease $W_1$ and increase $W_2$ for the new shortest path.
\label{monoton}
\end{corollary}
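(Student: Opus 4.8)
The plan is to derive the Corollary directly from Theorem~\ref{thmconcave} (concavity) together with the elementary dichotomy stated just above the Corollary. Fix the endpoints $A,B$, write $f(p)=p\,x_{sh}(p)+(1-p)\,y_{sh}(p)$ for the optimal value of the composite metric, and recall that for a path of costs $(x,y)$ the affine map $p\mapsto px+(1-p)y$ has slope $x-y$.

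First I would record a supporting-line reformulation of ``slope''. Since $(w_1,w_2)$ is a shortest path at $p=p'$, the path $(w_1,w_2)$ is a competitor for every $q\in[0,1]$, so $f(q)\le q w_1+(1-q)w_2$, with equality at $q=p'$; hence the line of slope $w_1-w_2$ supports the graph of $f$ from above at $p'$. Likewise the line of slope $v_1-v_2$ supports the graph of $f$ at $p''$. Because $f$ is concave by Theorem~\ref{thmconcave} and $p'<p''$, the slopes of such supporting lines are non-increasing in the point of tangency, which gives
\[
w_1-w_2\ \ge\ v_1-v_2 .
\]
(Phrasing the comparison through supporting lines is precisely what avoids any ambiguity about ``the'' slope at a breakpoint of the piecewise-linear function $f$.)

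Next I would invoke the dichotomy preceding the Corollary: since both $(w_1,w_2)$ and $(v_1,v_2)$ occur as shortest paths for some parameter, we cannot have $w_1<v_1$ and $w_2<v_2$ simultaneously (else $(w_1,w_2)$ would be strictly shorter for all $p$ and $(v_1,v_2)$ never optimal), nor $w_1>v_1$ and $w_2>v_2$ simultaneously; therefore $(w_1-v_1)(w_2-v_2)\le 0$. Setting $a=w_1-v_1$ and $b=w_2-v_2$, we have $a\ge b$ and $ab\le 0$, and a two-line check finishes the argument: if $b>0$ then $a\ge b>0$ forces $ab>0$, a contradiction, so $b\le 0$, i.e. $w_2\le v_2$; and if $a<0$ then $ab\le 0$ forces $b\ge 0$, hence $b=0$ and then $a\ge b=0$, again a contradiction, so $a\ge 0$, i.e. $w_1\ge v_1$. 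This is exactly the claimed $w_1\ge v_1$, $w_2\le v_2$.

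I do not expect a real obstacle: all the mathematical content sits in Theorem~\ref{thmconcave}, and the Corollary is essentially its one-dimensional ``derivative'' statement. The only points needing a little care are (a) stating the slope comparison via supporting lines so that breakpoints of $f$ cause no trouble, and (b) spelling out the dichotomy $(w_1-v_1)(w_2-v_2)\le 0$ — which, if one prefers a self-contained proof, can instead be obtained directly from the two optimality inequalities $p'w_1+(1-p')w_2\le p'v_1+(1-p')v_2$ and $p''v_1+(1-p'')v_2\le p''w_1+(1-p'')w_2$ without mentioning concavity at all.
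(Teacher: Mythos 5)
Your proposal is correct and follows essentially the same route as the paper: concavity of $f(p)$ forces the slope $x_{sh}-y_{sh}$ of the supporting lines to be non-increasing in $p$, and the dichotomy $(w_1-v_1)(w_2-v_2)\le 0$ (stated in the paper just before the Corollary) then pins down the signs. Your treatment is somewhat more careful than the paper's sketch --- the supporting-line phrasing handles breakpoints cleanly, and the purely algebraic alternative via the two optimality inequalities is a nice self-contained fallback --- but the underlying argument is the same.
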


This could be very useful when searching for an optimal $p$, although it works only in the $2$-dimensional case.


\section{Iterative algorithm}

Though no algorithm can guarantee 100\% discovery rate of feasible paths,
an optimal linear combination
of weights could be found to maximize the feasible paths discovery rate for distributed routing.
Similar iterative techniques were previously used to find such an optimal combination, see e.g. \cite{Puri}.

We show that though finding the maximum number of feasible paths is equivalent to a global maximum search
for a non-convex function it is still possible to find it as if it were convex.

Let $N_y$ be the number of {\it satisfied} pairs, i.e. a feasible path can be found by our algorithm for every such pair of source-destination nodes.
Let $N_n$ be the number of {\it non-satisfied} pairs, i.e. we can prove that no feasible paths exist for such pairs of nodes.
The rest of the pairs are called {\it uncertain} since the algorithm can not decide whether a feasible path exists or not for such pairs.
If $N_{tot}$ is the total number of pairs,
the number of uncertain pairs is $N_u = N_{tot} - N_y - N_n$.
Our goal is to maximize the number of satisfied pairs.
Minimizing the number of uncertain pairs could be another possible option.

First of all, we normalize all the costs by the corresponding constraint values.
The condition for the total costs along feasible paths becomes $ \sum w_i < 1 \ ,$
where $w_i = W_i / C_i$.
It seems to be obvious that the shortest path tree remains the same after such a normalization.

Then we normalize the parameter ${\bf p}$ to ensure $\sum p_k = 1$. It is clear that if some path is the shortest one
in the sense of the metric $p_1 w_1 + p_2 w_2$, for example, the same path remains to be the shortest one
for the metric $p w_1 + (1-p) w_2$.

We are searching for the global maximum of $N_y$, which is a function of the parameter $p$.
If every router in the network searches the maximum deterministically, e.g. by
a predefined grid search, some gradient method or some iterative method with a number of iterations fixed in advance,
 etc., every router finds the same value of $p$
without any information exchange between the routers.
For the same value of $p$, the same scalar metric for every router makes a distributed routing possible.
The absence of loops is inherited from single metric shortest path algorithms.

A special note should be made regarding non-satisfied pairs, $N_u$.
If we compute the shortest path tree for some source node $s$, and for some given value of the parameter, $p'$,
we have two following options (Figure~\ref{satnonsat}).
If for some pair of nodes $\sum w(p') < 1$, where $w = p' w_1 + (1-p') w_2$,
we need to check if the correspondent path is feasible,
i.e. if both $\sum w_1 < 1$ and $\sum w_2 < 1$.
However, if for some pair of nodes $\sum w(p') > 1$ then no feasible path could exist between these nodes,
even for a different value of the parameter $p$.
We can prove it as follows.
Let $\sum w = \sum \left(p' w_1 + (1-p') w_2\right)> 1$ along the shortest path between some nodes.
Let's assume that a feasible path do exists between these nodes.
Along that path $\sum  \left(p' w_1 + (1-p') w_2\right) = p' \sum  w_1 + (1-p') \sum w_2$.
Since $\sum w_1 < 1$ and $\sum w_2 <1$ along that path, we can conclude that
$\sum w < p' + (1-p') = 1$. Since we came to the contradiction, we proved that
if for some value of $p'$
\begin{equation}
 \sum \left(p' w_1 + (1-p') w_2\right)> 1 \Longrightarrow \text{ non-satisfied} \ ,
\label{N_n}
\end{equation}
i.e. no corresponding feasible path could exist.
It should be noted that in the case of $\sum w(p) < 1$ we can not make such a definite conclusion
since for a different value of $p$ that inequality might not hold.

\begin{figure}[tp]
\centering
\includegraphics[scale=0.25]{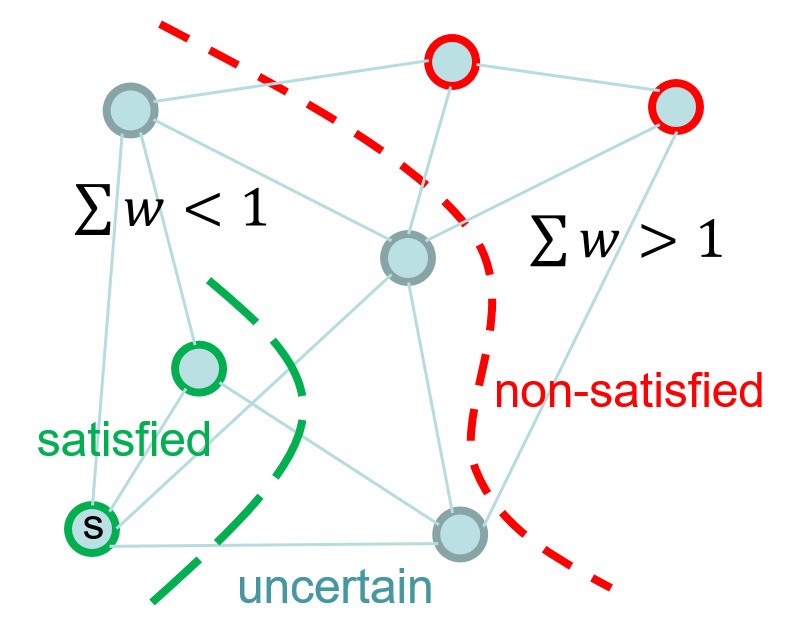}
\caption{There are two options for every path from one-to-all shortest path tree
with metric $w = p w_1 + (1-p) w_2$. If $\sum w > 1$ for some pair of nodes then no feasible path could exist
between these nodes. If $\sum w < 1$ we still need to check if the correspondent path is feasible,
i.e. if both $\sum w_1 < 1$ and $\sum w_2 < 1$.}
\label{satnonsat}
\end{figure}

\begin{figure}[tp]
\centering
\includegraphics[scale=0.37]{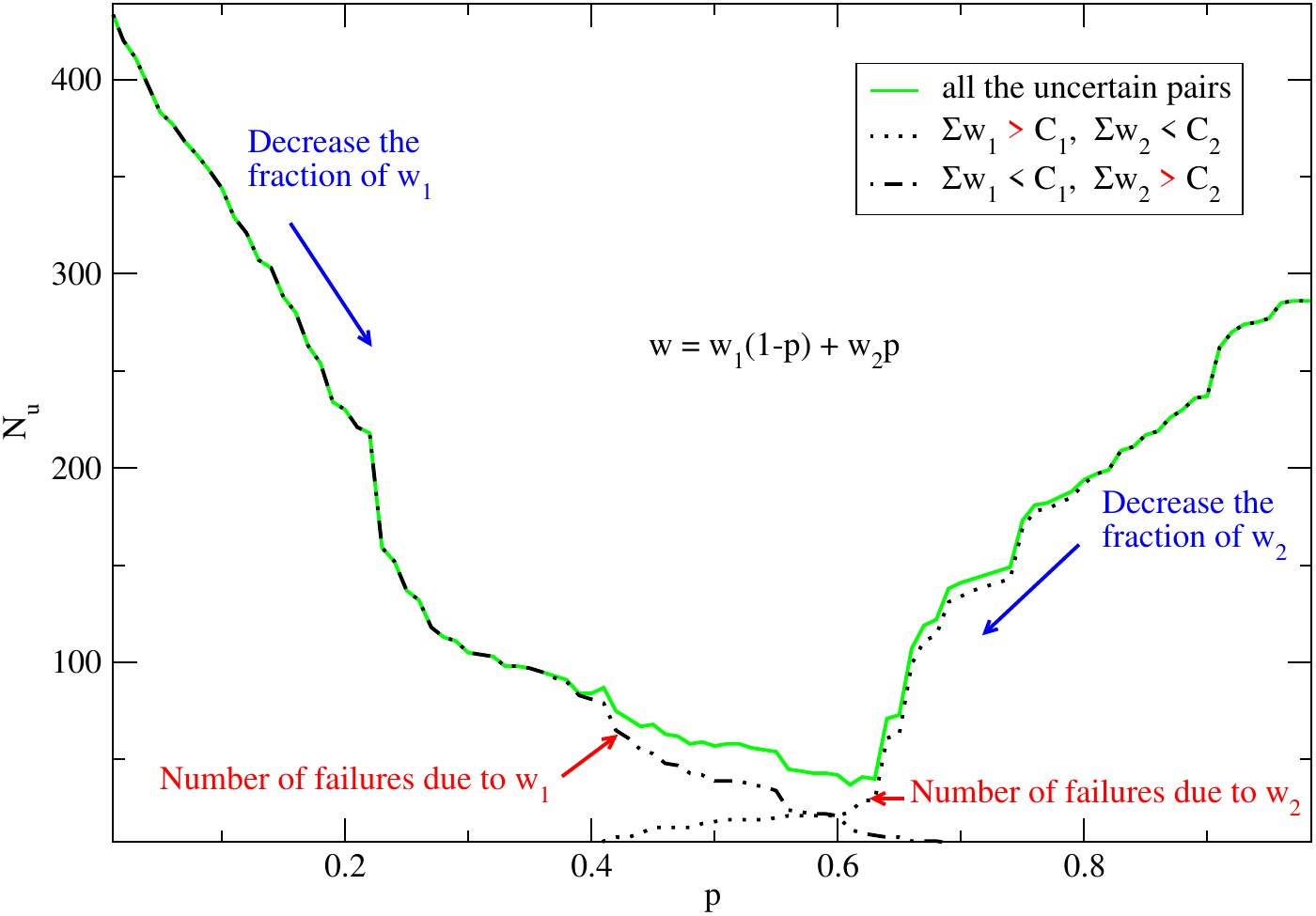}
\caption{Typical number of uncertain pairs (green solid line) along iterative search using $w=w_1(1-p)+w_2 p$.
Number of failure functions (black dotted lines) are monotone due to the Corollary \ref{monoton}.}
\label{typical_p}
\end{figure}

Figure~\ref{typical_p} presents a typical behavior of the number of uncertain pairs $N_u$ as a function of $p$.
In agreement with Theorem~\ref{thp}, $N_u(p)$ shows relatively "good behavior", though it is non-monotone.
While searching for the minimum of $N_u(p)$ and computing shortest path tree for some fixed value of $p$,
we could check the constraint satisfaction and "prune" the unfeasible branches.
 Corollary \ref{monoton} suggests the following procedure. We count the "pruned" pairs with pruning caused by the violation of
the first and the second constraint separately. We change $p$ in order to
equalize these numbers.
For example, if most of the paths are pruned due to the second constraint,
i.e. $$\sum w_1 < 1, \  \sum w_2 >1, \ \sum((1-p)w_1 + p w_2) < 1$$ we should decrease $p$.
In the opposite case, if $\sum w_1 > 1, \sum w_2 <1$, we decrease $(1-p)$ by increasing the value of $p$.
This property allows us using global minimum search by any algorithm designed for monotone functions
rather than using more complex algorithms for a global minimum search.

Our iterative algorithm overview:

\noindent 1. Fix $p\in [0,1]$ and set $w_p=pw_1+(1-p)w_2$\\
2. Run SPF with $w_p$ as a weight function\\
3. Repeat to find the optimal $p$ using monotonicity.

It should be noted that the number of satisfied pairs, $N_y(p)$, could be found relatively quickly (Figure~\ref{conv}).
On the other hand, the number of non-satisfied pairs, $N_n(p)$,
is cumulative and therefore is monotonically increasing after each iteration
in accordance with the condition~\ref{N_n}.
If we are mostly interested in satisfied pairs, a faster, e.g. Newton's, algorithm could be used.
If our goal is to minimize the number of uncertain pairs then we should rather prefer a slower convergence
and “more uniform” sampling for $p \in [0,1]$ like in a golden section search (Figure~\ref{conv}).

\begin{figure}[tp]
\centering
\includegraphics[scale=0.37]{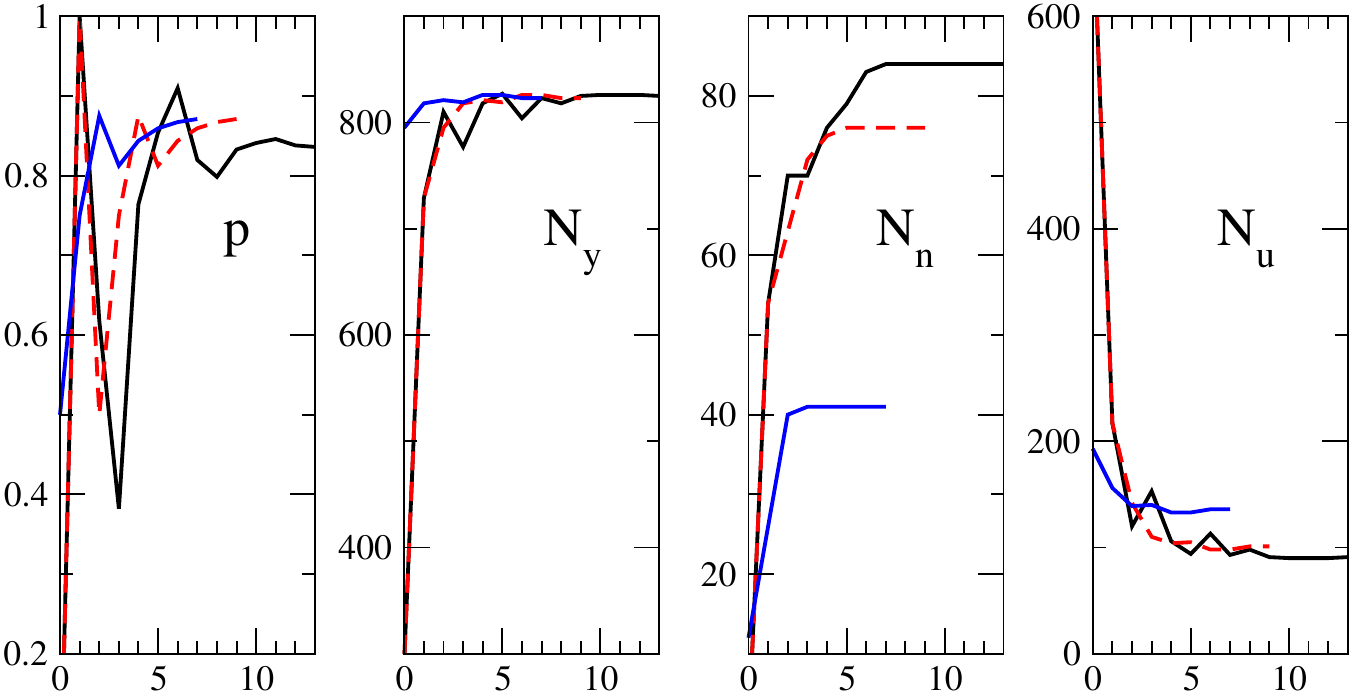}
\caption{Typical behavior of $p$, number of satisfied ($N_y$), non-satisfied ($N_n$), and uncertain ($N_u$) pairs
versus the number of iterations for different algorithms: golden section search (black solid), and dichotomy with (red dashed)
and without (blue dotted) checking the constraints at $p=0$ and $p=1$.
At every iteration, shortest path tree is computed for a fixed value of $p$.
}
\label{conv}
\end{figure}

We suggest using an iterative dichotomy algorithm (red line in the Figure~\ref{conv}). It converges to the optimal $N_y$
as well as accumulates $N_u$ relatively quickly due to the additional checks of the condition (\ref{N_n}) at $p'=0$ and $p'=1$.
It should be noted that it is not necessary to store the lists of feasible or unfeasible pairs.
To find the optimal parameter $p$ we need to count the number of such pairs only.

\begin{algorithm}[hpt]
\caption{}
\begin{algorithmic}[1]
\State Normalize costs $W_i$.
\State List \{ for all (A,B) $\rightarrow$ uncertain\}  //  {\it list of all possible pairs, all are uncertain by default}
\While {in time or discovery rate < sufficient level}
\For{p in $\{0;1;\frac{1}{2};\frac{1}{4} \mbox{or} \frac{3}{4} ; \ldots \mbox{etc.} \}$}
  \For{all A in Nodes}
    \For{all B $\neq$ A in Nodes}

  \State  Find shortest paths \{(A, to all)\ |\ p\} // using edge weight $:=p w_1+(1-p) w_2$

  \If{(A,B)$|_p$ satisfies constraints}
  \State update List\{(A,B) $\rightarrow$ there exists path that satisfies\}
  \EndIf

    \If{(A,B)$|_p$: $p w_1+(1-p) w_2>1$}
    \State update List\{(A,B) $\rightarrow$ there exists no path that satisfies\} //  {\it if it exists, then $p w_1+(1-p) w_2\leq 1$ for such path}
    \EndIf

    \EndFor
  \EndFor
\EndFor
\EndWhile

\end{algorithmic}
\end{algorithm}

Several options are possible. In a distributed mode we could choose one optimal value of $p$, same for every node.
This can be done if all the nodes follow the same iterative procedure and the accuracy, or the number of iterations,
is predefined in advance. Since some values of $p$ could be more "favorable" for some nodes than for the others,
we should possibly compute all-to-all paths in order to select the optimal $p$ for all the nodes.
On the other hand, in a centralized mode we could possibly broadcast the value of $p$ precomputed at a powerful
central node.

It could also be possible to use several values of $p$ simultaneously, e.g. for different types of traffic.
Though there is a probability that the shortest paths for different $p$ be partially disjoint
the sizes of the routing tables have to be multiplied respectively.
On the other hand, the cumulative number of feasible paths could increase due to the usage of multiple values of $p$.

It should be noted that though we compute an optimal $p$ that maximizes the total number of all-to-all satisfied pairs
it is also possible to select, maybe randomly though deterministically, some 'special' node.
We can count one-to-all satisfying paths from that node and choose the value for the optimal $p$ as one that
maximizes the number of these paths. If edge costs are distributed more or less randomly in the network
the obtained 'one-to-all optimal $p$' will be very close to 'all-to-all optimal $p$.
The choice of the 'special' node should be deterministic, so every node obtains the same one and therefore
every node computes the same optimal 'one-to-all $p$' value.

\section{Results for specific topologies}

We studied our algorithm performance for different graph topologies:
grid 15x15 and 45x45 for two cases when all-to-all or border-to-border pairs were considered;
dual-home topology (1000 nodes); mouth-like topology (1000 nodes).
Core of dual-home and mouth-like networks consisted of 10 ordered pairs of nodes $a$ and $b$.
Nodes of each pair had an edge between them, all $a$-nodes formed a clique as well as
all $b$-nodes were fully connected. The rest of the nodes were equally distributed among
the pairs. Each node was connected to both $a$-node and $b$-node of core pair for the
dual-home network. As for the mouth-like network, the rest of the nodes were equally distributed
in pairs among the pairs of the core. Each such pair was connected to both $a$-node and $b$-node of core pair.

Costs were randomly assigned: $W_1$ - from a normal distribution with 7.5 mean and 1.25 variance and
$W_2$ - from a discrete uniform distribution {0.01, 0.02, 0.03, 0.04, 0.05}. Cost $W_1$ represented latency,
cost $W_2$ represented an additive measure of packet loss rate.
For a three dimensional case we considered an additive measure for jitter $W_3$
from a positive normal distribution with 2.0 mean and 2.0 variance.

To explore the influence of the constraints we set the constraint $C_1$
to the maximum possible delay for the longest shortest path for the metric $w = W_1$ in our network.
Similarly, $C_2$ and $C_3$ were set to maximum possible packet loss rate, and jitter, on the longest shortest path in our network
for the metric $w = W_2$, and $w = W_3$, respectively.
In non-periodic grid topology, for example, the longest shortest path would normally be the
one connecting the opposite corners of the grid.
In addition, we multiply these maximum possible constraints by a coefficient $\alpha \in [0.5,1]$.
The larger the coefficient $\alpha$, the larger the number of node pairs for which a feasible path could possibly exist.
The case of $\alpha \ge 1$ corresponds to an unconstrained routing.
We count the number of satisfied, non-satisfied, and uncertain all-to-all or border-to-border pairs
as a function of $\alpha$.
In addition, we compute a discovery rate which was defined as
$$
\mbox{R} = \frac{N_y}{N_y+N_u}.
$$
Averaged over 10 random realizations results for the topologies mentioned are shown in the Table~\ref{table1}
and in the Figures~\ref{r1}, \ref{r2}, and \ref{r3}.

\begin{table}
\centering
\begin{tabular}{c|c|c|c}
a2a & $N_{tot}$ & $N_u^{max}$, \% & $R^{min}$, \% \\
\hline
$DH $	& $10^6$ 		& 5.2--7.5 [10.3--14.0]& 86--86 [75--72]\\
$ML $	& $10^6$ 		& 4.4--7.1 [7.7--11.3]& 94--92 [88--85]\\
$G15$	& 50625 		& 1.2--2.5 & 99--96 \\
$G45$	& 4100625 		& 0.7--2.3 [1.4--4.4]& 99--96 [97--92]\\
\hline
b2b&		&  		  & \\
\hline
$G15$	& 3136 			& 1.9--4.0 & 97--94 \\
$G45$	& 30976 		& 1.4--4.5 [2.1--7.4]& 98--94 [97--90]\\
\hline
\end{tabular}
\caption{Total number of pairs $N_{tot}$, maximum number of uncertain pairs $N_u^{max}$, minimum discovery rate $R^{min}$,
for all-to-all (a2a) and border-to-border (b2b) topologies:
$DH$ (Dual-home), $ML$ (Mouth-like), $G15$ (Grid 15x15), $G45$ (Grid 45x45).
Two values are given for $N_u^{max}$ and $R^{min}$: the left one - for multiple values of $p$, the right one -
if a single value of $p$ is used. Two weights scenario. The results for three weights are shown in square brackets.
}
\label{table1}
\end{table}

\begin{figure}[tp]
\centering
Two weights:
\includegraphics[scale=0.36]{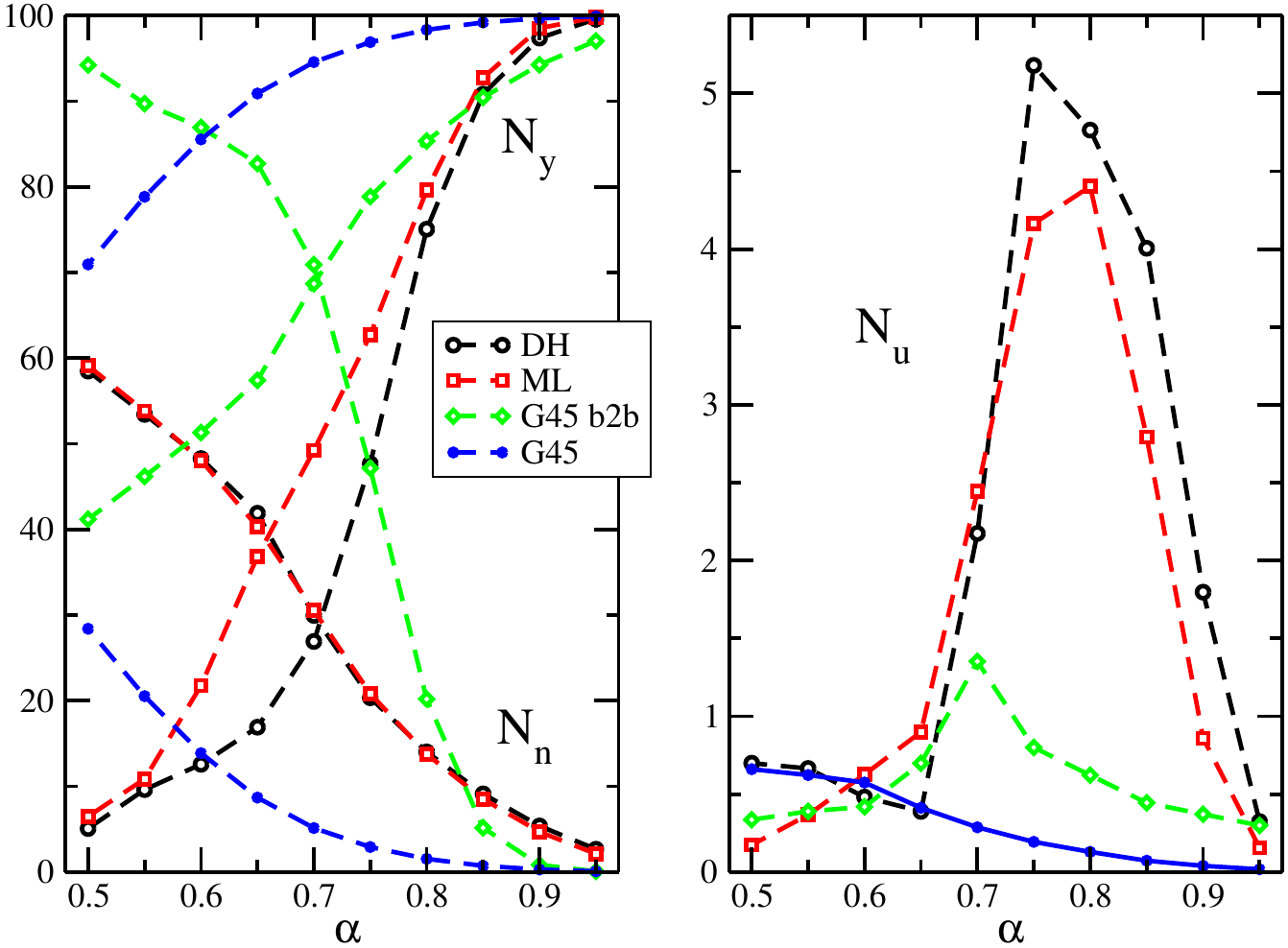}\\
Three weights:
\includegraphics[scale=0.36]{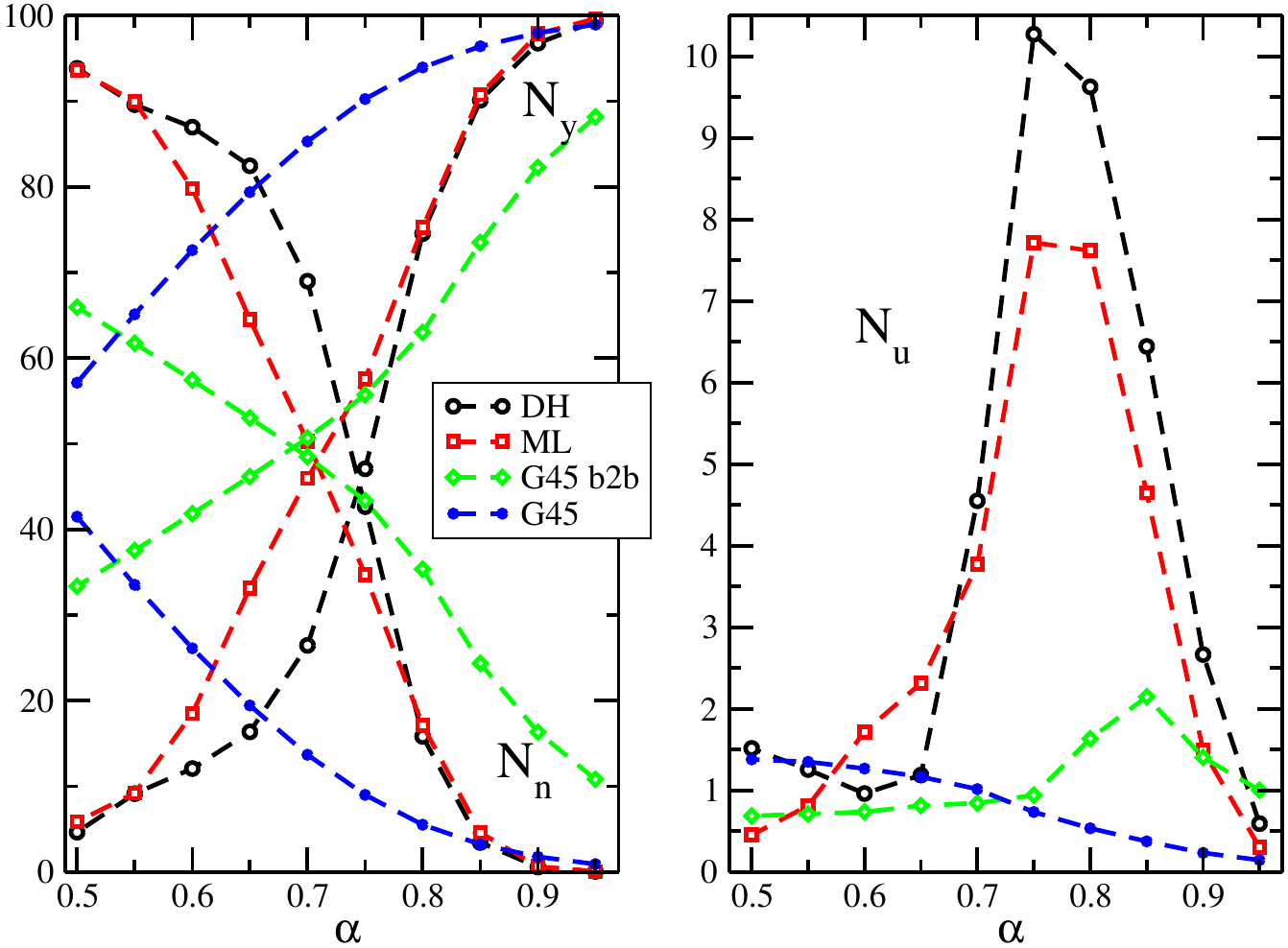}
\caption{"Multiple values of $p$ mode" routing. Satisfied, $N_y$,  non-satisfied, $N_n$, and uncertain, $N_u$,
pairs in percents as functions of the coefficient $\alpha$ for different topologies: dual-home, mouth-like, and grid.}
\label{r1}
\end{figure}

\begin{figure}[tp]
\centering
Two weights:
\includegraphics[scale=0.36]{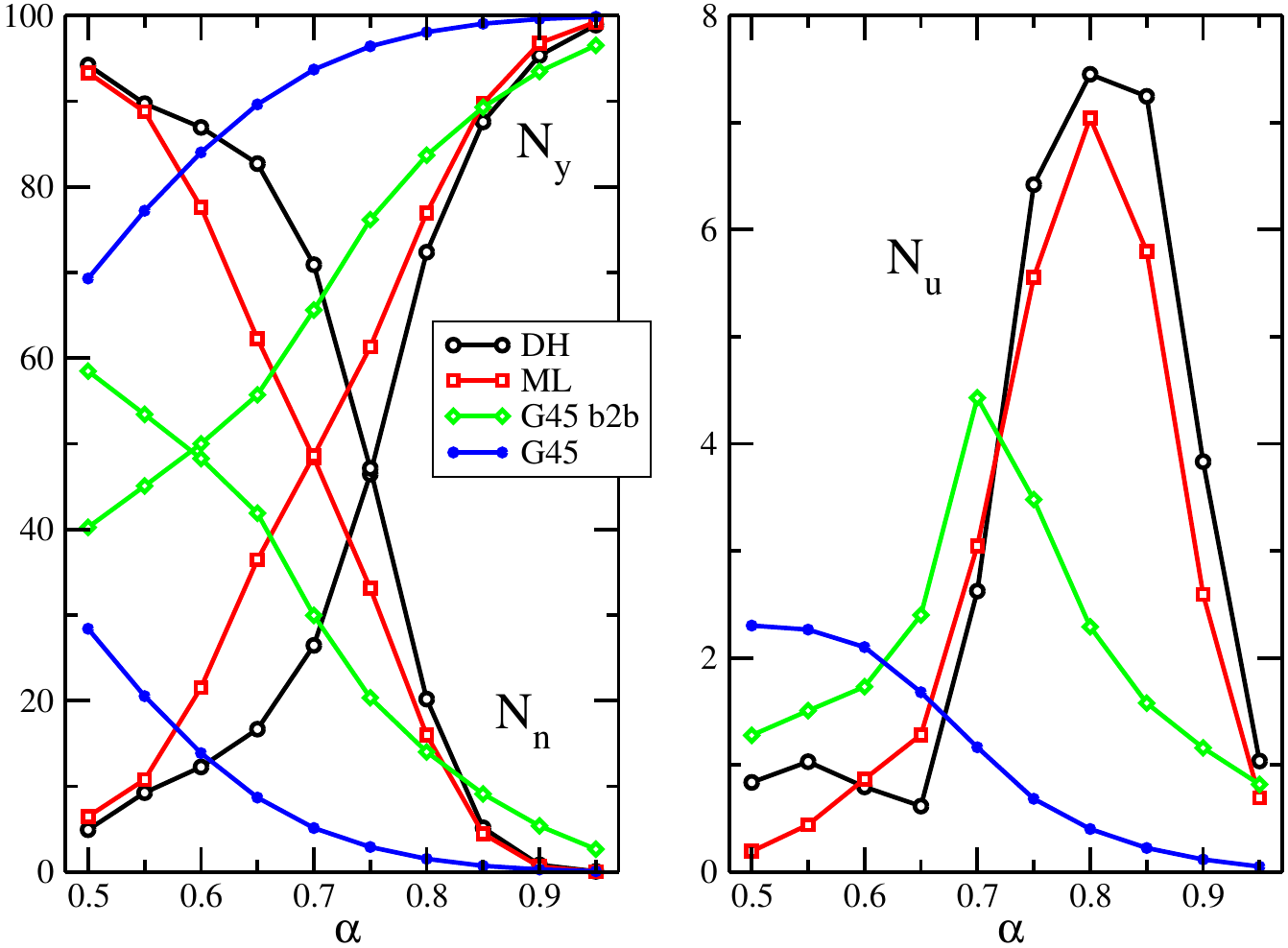}\\
Three weights:
\includegraphics[scale=0.36]{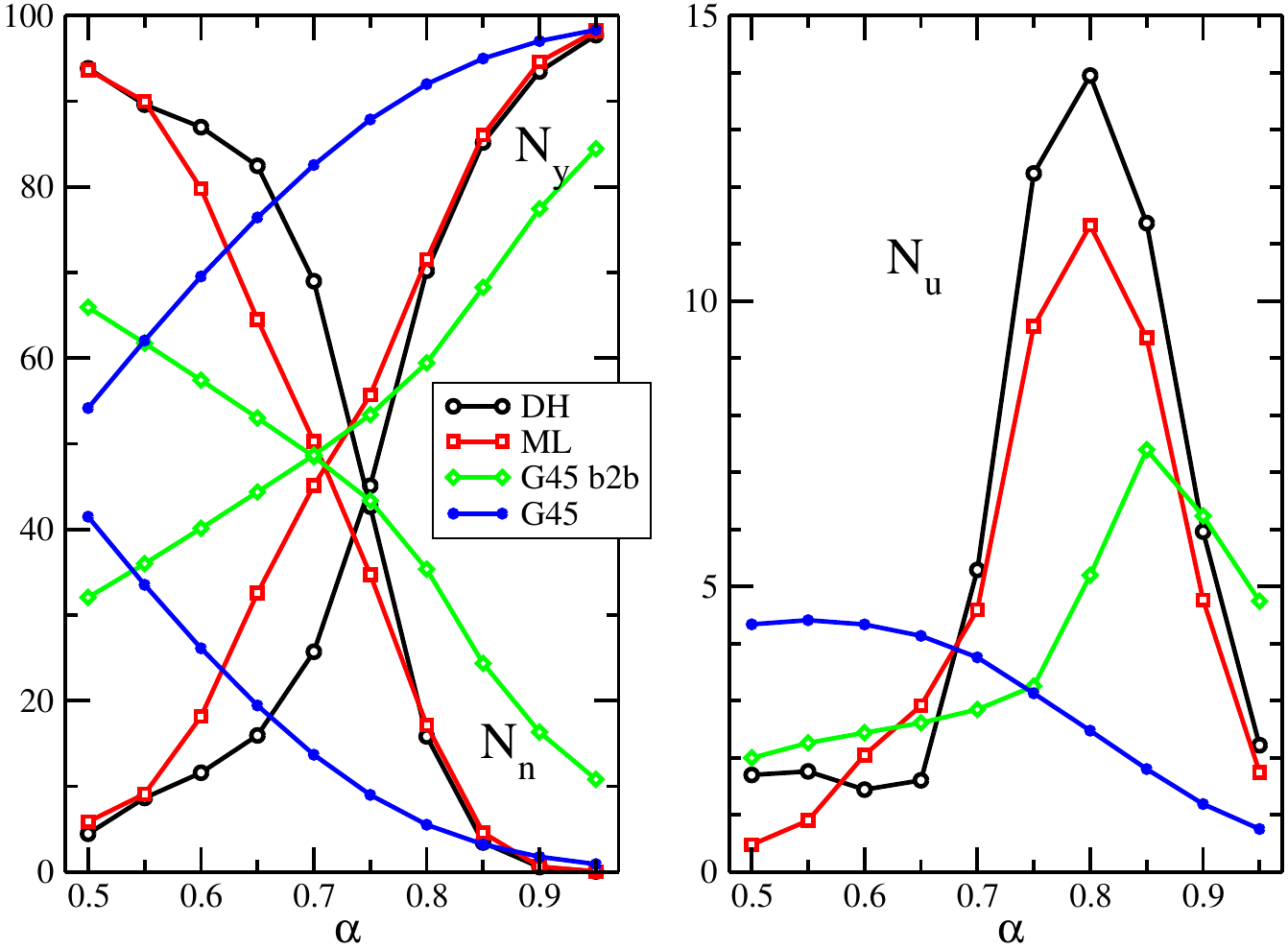}
\caption{"Single value of $p$ mode" routing. $N_y$, $N_n$, and $N_u$ as function of the coefficient $\alpha$
for different topologies: dual-home, mouth-like, and grid.
}
\label{r2}
\end{figure}

\begin{figure}[tp]
\centering
\begin{tabular}{cc}
Two weights: \hspace{1cm} & \hspace{1cm} Three weights: \\
\end{tabular}
\includegraphics[scale=0.33]{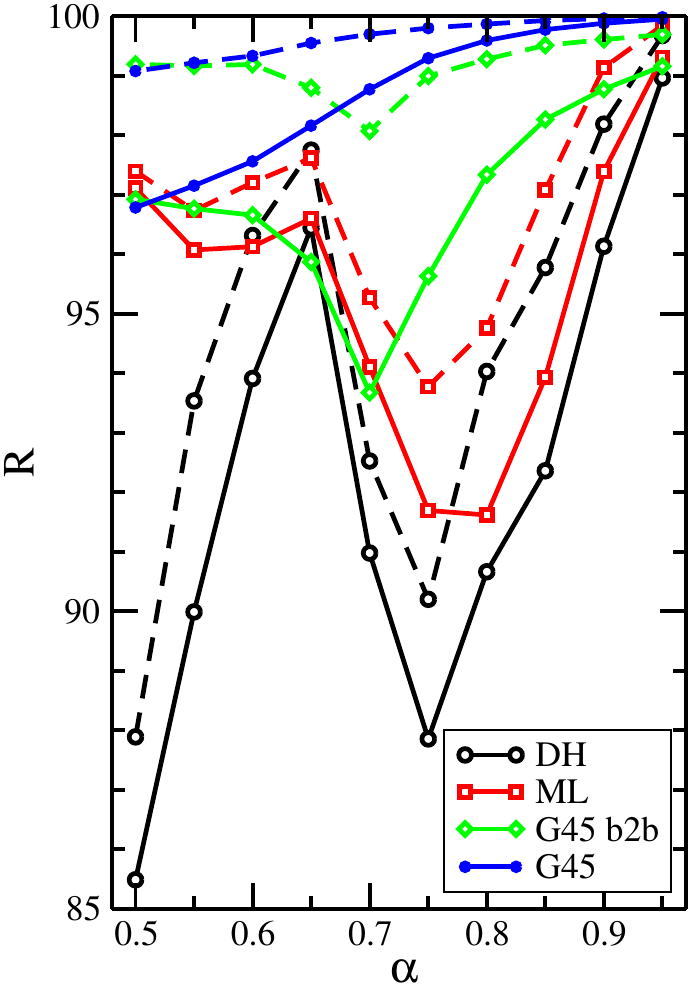}
\includegraphics[scale=0.33]{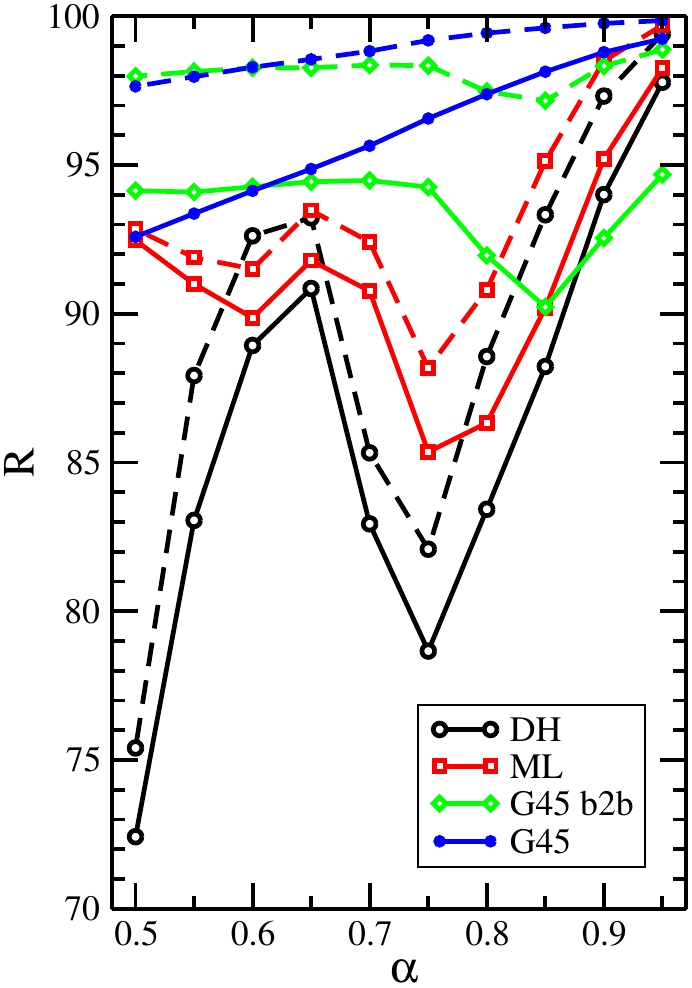}
\caption{Discovery rate $R(\alpha)$ for "single $p$ mode" (solid lines) and "multiple $p$ mode" (dashed lines) routing
for different topologies: dual-home, mouth-like, and grid.}

\label{r3}
\end{figure}

\section{Conclusion and discussion}

The problem of using a linear combination of weights for a distributed routing under multiple constraints was studied.
We showed that no $100\%$ feasible path discovery could generally be achieved using a distributed routing for a network,
even if a feasible path exists for every pair of the nodes.
We gave quantitatively sharp results in terms of the proportion of discovered paths and the slack in the constraints that is exhibited by feasible paths.  Moreover, we showed that in some cases no guarantee on discovery rate is possible in distributed mode.

We proposed an iterative approach to find an optimal value of parameter $p$ for a distributed routing.
We showed that for two randomly assigned weights in our scenario
the percent of uncertain pairs is $<$~5.2~\% in the "multiple values of $p$ mode"
cases and $<$~7.5~\% in the "single value of $p$ mode" cases for the dual-home and
mouth-like, and for the 45x45 grids is $<$~0.7~\% and $<$~2.3~\%, respectively.
This corresponds to $>$~99~\% "multiple $p$" and $>$~96~\% "single $p$" mode discovery
rate for 45x45 grids.
Theory suggests that our approach is nearly topology-independent and
can be generalized for any number of constraints. For example, the results for both 2-constraints and 3-constraints cases
exhibit similar behavior (Fig.~\ref{r1}--\ref{r3}). However, in the case of 3 constraints, the percent of uncertain, or undiscovered, pairs is about  twice higher than that for the case with 2 constraints.

We established concavity of single linear mixed metric as a function of parameter $p$ for a general case of $n$ constraints.
Obtained results generalized previous $2$-constraints case studied in \cite{Puri} and can be used while studying routing under
any number of constraints. Concavity (Theorem~\ref{thmconcave}) allows us to search for the optimal $p$ relatively fast,
in 5 to 10 iterations (Fig.~\ref{conv}).

\section{Ethical issues}
This work does not raise any ethical issues.
It does not involve human subjects or potentially sensitive data.


\bibliographystyle{IEEEtran}
\bibliography{reference}
\end{document}